\pgfplotsset{compat=1.9}
\newcommand{\pic}[2]{\includegraphics[width = #1\textwidth]{#2}}
\newcommand{\mathup}[1]{\text{\textup{#1}}}
\newcommand{\tx}[1]{\mathrm{#1}}
\newcommand{\lb}{\left}
\newcommand{\mb}{\middle}
\newcommand{\rb}{\right} 
\newcommand{\defeq}{\vcentcolon=}
\newcommand{\Jac}{\nabla}
\newcommand{\Hess}{\nabla^2}
\newcommand{\T}{\top}
\newcommand{\dd}{\mathup{d}}
\newcommand{\N}{\mathup{N}}
\newcommand{\I}{\mathrm{I}} 
\newcommand{\intersect}{\cap}
\newcommand{\R}{\mathbb{R}}
\newcommand{\cH}{\mathcal{H}} 
\newcommand{\cF}{\mathcal{F}} 
\newcommand{\cB}{\mathcal{B}} 
\newcommand{\cR}{\mathfrak{R}} 
\newcommand{\uP}{\mathbb{P}} 
\newcommand{\cHcm}{{ \mathcal{H}_\tx{CM} }} 
\newcommand{\barxi}{\eta}
\DeclareMathOperator*{\argmin}{arg\,min}
\newcommand{\algmargin}{\the\ALG@thistlm}
\newlength{\whilewidth}
\algnewcommand{\parState}[1]{\State%
  \parbox[t]{\dimexpr\linewidth-\algmargin}{\strut #1\strut}}
\newtheorem{assume}[theorem]{Assumption}
\title{Scalable optimization-based sampling on function space}
\author{
  Johnathan M.~Bardsley%
  \thanks{Department of Mathematical Sciences, Montana, University of Montana, Missoula, MT 59812 USA (\email{bardsleyj@mso.umt.edu})}
  \and
  Tiangang Cui%
  \thanks{School of Mathematics, Monash University, Victoria 3800, Australia (\email{tiangang.cui@monash.edu})}
  \and
  Youssef M.~Marzouk%
  \thanks{Center for Computational Engineering, Massachusetts Institute of Technology, Cambridge, MA 02139 USA (\email{zheng\_w@mit.edu}, \email{ymarz@mit.edu})}%
  \and
  Zheng Wang%
  \footnotemark[3]
}
\begin{document}

\maketitle

\begin{abstract}
Optimization-based samplers such as randomize-then-optimize (RTO)~\cite{rto}
provide an efficient and parallellizable approach to solving large-scale Bayesian inverse problems. 
These methods solve randomly perturbed optimization problems to draw samples from an approximate posterior distribution. ``Correcting'' these samples, either by Metropolization or importance sampling, enables characterization of the original posterior distribution.
This paper focuses on the scalability of RTO to problems with high- or infinite-dimensional parameters. 
We introduce a new subspace acceleration strategy that makes the computational complexity of RTO scale linearly with the parameter dimension.
This subspace perspective suggests a natural extension of RTO to a function space setting. We thus formalize a function space version of RTO and establish sufficient conditions for it to produce a valid Metropolis--Hastings proposal, yielding \emph{dimension-independent} sampling performance. 
Numerical examples corroborate the dimension-independence of RTO and demonstrate sampling performance that is also robust to small observational noise. 

\end{abstract}

\begin{keywords}
Markov chain Monte Carlo, Metropolis independence sampling, Bayesian inference, infinite-dimensional inverse problems, transport maps
\end{keywords}

\begin{AMS}
15A29, 65C05, 65C60
\end{AMS}
\section{Introduction} \label{sec:introduction}
The Bayesian framework is widely used for uncertainty quantification in inverse problems---i.e., inferring parameters of mathematical models given indirect and noisy data \cite{KaipioSomersalo, StuartActa}.
In a Bayesian setting, the parameters are described as random variables and endowed with prior distributions. Conditioning on an observed data set yields the posterior distribution of these parameters, which characterizes uncertainty in possible parameter values. Solving the inverse problem amounts to computing posterior expectations, e.g., posterior means, variances, marginals, or other summary statistics.

Sampling methods---in particular, Markov chain Monte Carlo (MCMC) algorithms---provide a flexible yet provably convergent way of estimating posterior expectations \cite{MCMC:BGJM_2011}. The design of effective MCMC methods, however, rests on the careful construction of proposal distributions: efficiency demands proposal distributions that reflect the geometry of the posterior \cite{riemann}, e.g., anisotropy, strong correlations, and even non-Gaussianity \cite{mattMCMC}. 
Another significant challenge in applying MCMC %
 is parameter dimensionality. In many inverse problems governed by partial differential equations, the ``parameter'' is in fact a function of space and/or time that, for computational purposes, must be represented in a discretized form. Discretizations that sufficiently resolve the spatial or temporal heterogeneity of this function are often high dimensional.
Yet, as analyzed in \cite{convdifflim,convrates,convrwm,mala}, the performance of many common MCMC algorithms may degrade as the dimension of the discretized parameter increases, meaning that more MCMC iterations are required to obtain an effectively independent sample. One can design MCMC algorithms that do not degrade in this manner by formulating them in function space and ensuring that the proposal distribution satisfies a certain absolute continuity condition~\cite{pCN,StuartActa}. These samplers are called \emph{dimension independent} \cite{pCN,dili}.
Yet another core challenge is that MCMC algorithms are, in general, intrinsically serial: sampling amounts to simulating a discrete-time Markov process. The literature has seen many attempts at parallelizing the evaluation of proposed points \cite{CalderheadPNAS} or sharing information across multiple chains \cite{ConradJUQ,IP:HLH_2003}, but none of these is embarrassingly parallel.

A promising approach to many of these challenges is to \emph{convert optimization methods into samplers} (i.e., Monte Carlo methods). This idea has been proposed in many forms: key examples include randomize-then-optimize (RTO) \cite{rto}, Metropolized randomized-maximum-likelihood (RML) \cite{auRML, MCMC:WBG_2018}, and implicit sampling \cite{generalimplicit,implicit}. 
In its most basic form, RTO requires Bayesian inverse problems with Gaussian priors and noise models, although it can extend to problems with non-Gaussian priors via a change of variables \cite{l1prior}. Metropolized RML has problem requirements similar to those of RTO, but requires evaluating second-order derivatives of the forward model. Implicit sampling applies to target densities whose contours enclose star-convex regions; each proposal sample can then be generated cheaply by solving a line search.

In general, each of these algorithms solves randomly-perturbed realizations of an optimization problem to generate samples from a probability distribution that is ``close to'' the posterior. 
The probability density function of this distribution is computable, and thus the distribution can be used as an independent proposal within MCMC or as a biasing distribution in importance sampling. For non-Gaussian targets, these proposal distributions are non-Gaussian. In general, they are \emph{adapted} to the target distribution. 
The computational complexity and dimension-scalability of the resulting sampler can be linked to the structure of the corresponding optimization problem. 
In addition, these sampling methods are embarrassingly parallel, and are easily implemented with existing optimization tools developed for solving deterministic inverse problems.

This paper considers optimization-based sampling in high dimensions. In particular, we focus on the scalable implementation and analysis of the RTO method. 
To begin, in Section \ref{sec:background} we present interpretations of RTO that provide intuition for the method and its regime of applicability.
Using these interpretations, we next motivate and construct a subspace-accelerated version of RTO whose computational complexity scales linearly with parameter dimension (Section~\ref{sec:scalable}). This approach significantly accelerates RTO in high-dimensional settings. 
Subspace acceleration reveals that RTO's mapping acts differently on different subspaces of the parameter space. 
In Section~\ref{sec:proof}, we exploit this separation of the parameter subspaces to cast the transport map generated by RTO in an infinite-dimensional (i.e., function space) setting \cite{StuartActa}. 
We also establish sufficient conditions for the probability distribution induced by RTO's mapping to be absolutely continuous with respect to the posterior.
This result justifies RTO's observed {\it dimension-independent} sampling behavior: the acceptance rate and autocorrelation time of an MCMC chain using RTO as its proposal do not degrade as the parameter dimension increases. 
Similarly, the performance of importance sampling using RTO as a biasing distribution will stabilize in high dimensions. 
This result is analogous to the arguments in \cite{MCMC:BGLFS_2017,MCMC:BRSV_2008,dili,MCMC:DS_2018,StuartActa} showing that (generalized) preconditioned Crank--Nicolson (pCN), dimension-independent likelihood-informed (DILI) MCMC, and other infinite-dimensional geometric MCMC methods are dimension-independent. 
However, our MCMC construction relies on non-Gaussian proposals in a Metropolis independence setting, where the Markov chain can be run at essentially zero cost \emph{after} the computationally costly step of drawing proposal samples and evaluating the proposal density.
Because the latter step is embarrassingly parallel, the overall MCMC scheme is immediately parallelizable, unlike the above-mentioned MCMC samplers that rely on the iterative construction of Markov chains. 

In Section~\ref{sec:numerics}, we provide a numerical illustration of our algorithm on a one-dimensional elliptic PDE problem, exploring the factors that influence RTO's sampling efficiency. We observe that neither the parameter dimension nor the magnitude of the observational noise influence RTO's performance \emph{per MCMC step}, though they both impact the computational cost of each step. Despite its more costly steps, RTO outperforms simple pCN in this example. 
In Section~\ref{sec:numerics_heat}, we further demonstrate the efficacy of our algorithm on a challenging two-dimensional parabolic PDE problem. 
Overall, our results show that RTO can tackle inverse problems with large parameter dimensions and arbitrarily small observational noise.

\section{Background}
\label{sec:background}

RTO generates samples from an approximation to the target (e.g., posterior) distribution in two steps. First, it repeatedly solves perturbed optimization problems to generate independent proposal samples. Second, it uses this collection of samples to describe an independent proposal for Metropolis--Hastings (MH) or a biasing distribution for self-normalized importance sampling. 
In this section, we first describe the target distributions to which RTO can be applied. %
We then provide interpretations of RTO from the geometric and transport perspectives, which lead to useful insights regarding both the sampling efficiency of RTO and sufficient conditions for the RTO procedure to be valid.
For completeness, we conclude this section by summarizing RTO and other comparable optimization-based sampling methods using the transport map interpretation.

\subsection{Target distribution} 
\label{sec:target}
RTO applies to target distributions on $\mathbb{R}^n$ whose densities can be written as
\begin{equation}  
\pi_\tx{tar}(v) \propto \exp\lb(-\frac12 \lb\| H(v) \rb\| ^2\rb), \label{eq:lsform}
\end{equation}
where $H: \R^n \to \R^{m+n}$ is a vector-valued function of the parameters $v \in \mathbb{R}^n$ with an output dimension of $n+m$, for any $m \geq 1$. This structure is found in Bayesian inverse problems and other similar problems with $n$ parameters, $m$ observations, a Gaussian prior, and additive Gaussian observational noise. To illustrate, let
\[
y = F(u) + \epsilon, \quad \epsilon \sim \N(0,\Gamma_\tx{obs}), \quad u \sim \N(m_\tx{pr},\Gamma_\tx{pr}),
\]
where $y \in \R^m$ is the data, $F:\R^n \to \R^m$ is the forward model, $u \in \R^n$ is the unknown parameter, and $\epsilon \in \R^m$ is the additive noise, assumed independent of $u$. Here, $m_\tx{pr}$ is the prior mean, and $\Gamma_\tx{obs}$ and $\Gamma_\tx{pr}$ are the covariance matrices of the observation noise and prior.
We can simplify the problem via an affine change of variables that transforms the covariance matrices to identity matrices. 
Defining matrix factorizations of the covariances of prior and observation noise
\[
S^{}_\tx{pr} S_\tx{pr}^\T \defeq \Gamma_\tx{pr}, 	\quad S^{}_\tx{obs} S_\tx{obs}^\T \defeq \Gamma_\tx{obs},
\]
we have new whitened variables,
\[ 
v \defeq S_\tx{pr}^{-1} (u - m_\tx{pr}), \quad  G(v) \defeq S_\tx{obs}^{-1}\lb[ F\lb(S_\tx{pr}^{~} v + m_\tx{pr}\rb) - y \rb], \quad  e \defeq S_\tx{obs}^{-1}\epsilon,
\]
where $v \in \R^n$ is the whitened unknown parameter, $G:\R^n \to \R^m$ is the whitened forward model, and $e$ is the whitened observational noise. The inverse problem becomes
\[
0 = G(v) + e, \quad e \sim \N(0,\I), \quad v \sim \N(0,\I) .
\]
The data is shifted to the origin after whitening. The posterior density of the whitened variable $v$ is then
\[ p(v\vert y) = \pi_\tx{tar}(v) \propto \exp\lb(-\frac12 \lb\|\begin{bmatrix} v \\ G(v) \end{bmatrix}\rb\|^2 \rb) ,
\]
which is in the required form (\ref{eq:lsform}) with $H$ defined as
\begin{equation}\label{eq:H} H(v) \defeq \begin{bmatrix} v \\ G(v) \end{bmatrix}. \end{equation}
Given a sample $v$ from the target density $\pi_\tx{tar}(v)$, we can obtain a posterior sample of the original parameter $u$ by applying the transformation
\[ 
u = S_\tx{pr} v + m_\tx{pr}.
\]

Notice that the form of $\pi_\tx{tar}(v)$ in \eqref{eq:lsform} is identical to the probability density function of an $(n+m)$--dimensional standard normal distribution, $\pi(w) \propto \exp\lb( -\frac12 \|w\|^2 \rb)$, evaluated at $w = H(v)$. 
This paints the geometric picture of the required target distribution: the target density $\pi_\tx{tar}(v)$, up to a normalizing constant, is the same as the density of the $(n+m)$--dimensional Gaussian distribution evaluated on the $n$--dimensional manifold $H(v)=(v,G(v)) \subset \R^{m+n}$ parameterized by $v \in \R^n$. 

\subsection{The RTO algorithm}\label{sec:geo_rto}
\newcommand{\vpi}{v_\tx{prop}^{(i)}}
The RTO algorithm requires an orthonormal basis for an $n$-dimensional subspace of $\R^{n+m}$. 
Let this basis be collected in a matrix $Q \in \R^{(m+n)\times n}$ with orthonormal columns. One common choice of $Q$ follows from first finding a linearization point $v_\tx{ref}$, which is often (but not necessarily) taken to be the maximum of the target density, i.e., 
\begin{equation} \label{eq:opmode} v_\tx{ref} = \argmin_{v} \frac12 \lb\|  H(v) \rb\| ^2.\end{equation}
Then one can compute $Q$ from a thin QR factorization of $\nabla H(v_\tx{ref})$; this sets the basis to span the range of $\nabla H(v_\tx{ref})$.

Using this matrix, RTO obtains proposal samples $\vpi$ by repeatedly drawing independent $(n+m)$--dimensional standard normal vectors $\barxi^{(i)}$ and solving the nonlinear system of equations
\begin{equation}\label{eq:rtoequation}
Q^\T H(\vpi) = Q^\T \barxi^{(i)} ,
\end{equation}
which is equivalent to solving the optimization problem
\begin{equation} \label{eq:optprob}
\vpi = \argmin_{v} \frac12 \lb\|  Q^\T\lb( H(v) - \barxi^{(i)}\rb) \rb\| ^2,  
\end{equation}
\emph{if} the minimum of the objective function in \eqref{eq:optprob} is zero.
To ensure that the system of equations \eqref{eq:rtoequation} has a unique solution and that the probability density of the resulting samples can be calculated explicitly, RTO requires the following conditions \cite{rto}. 
\begin{assume}[Sufficient conditions for valid RTO]
\label{assumoverall}
\begin{enumerate}
	\item The function $H$ is continuously differentiable with Jacobian $\nabla H$.
	\item The Jacobian $\nabla H(v)$ has full column rank for every $v$.
	\item The map $v \mapsto Q^\T H(v)$ is invertible. \label{assum3}
\end{enumerate}
\end{assume}

\begin{figure}[h]
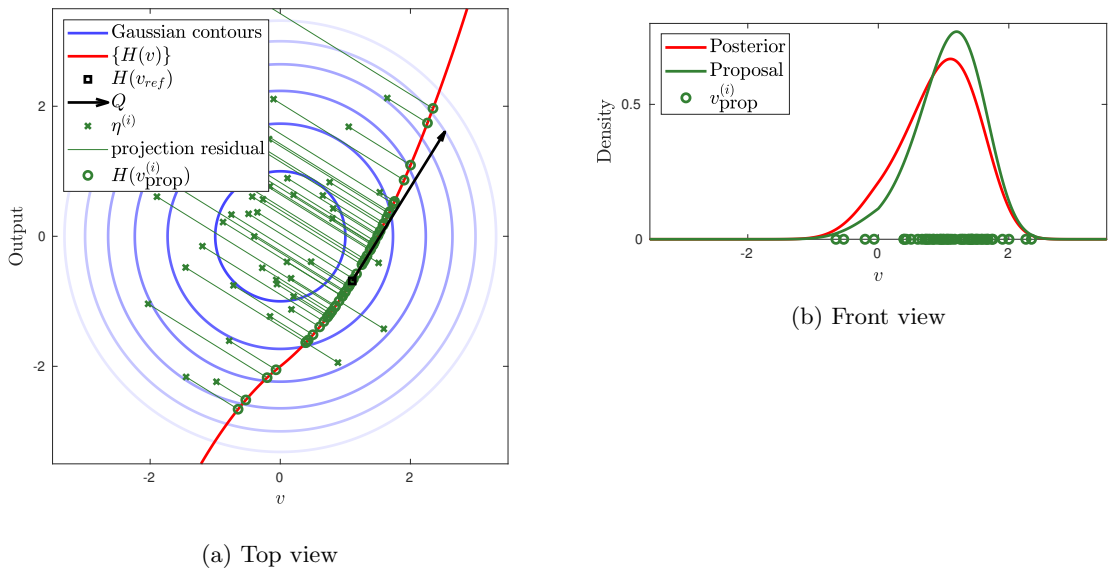

\centering
\begin{subfigure}{0.49\textwidth}
\centering
\pic{1}{pics/rto_proj}
\caption{Top view}
\end{subfigure}
\begin{subfigure}[t]{0.49\textwidth}
\centering
\pic{1}{pics/rto_dens}
\caption{Front view}
\end{subfigure}
\caption{Geometric interpretation of RTO, in the case $n=m=1$. RTO projects the $(n+m)$--dimensional Gaussian samples $\barxi$ (green crosses) onto the manifold $\{H(v)\}$ (red line) to determine the proposal samples $v_\tx{prop}$ (green circles). 
The projection residual $H(\vpi) - \barxi$ is orthogonal to the range of $Q$.
In the front view, the proposal samples $v_\tx{prop}$ (green circles) are shown to be distributed according to a proposal density (green line) that is close to the target density (red line).}\label{fig:proj}
\end{figure}

Proposal samples generated via RTO can be interpreted as a projection of $(n+m)$--dimensional Gaussian samples onto the $n$--dimensional manifold $\lb\{ H (v) \, : \, v \in \R^n \rb\}$. The samples are projected along the directions orthogonal to the range of $Q$ such that the condition in \eqref{eq:rtoequation} is satisfied. 
Figure~\ref{fig:proj} depicts the steps of RTO's proposal for the case $n=m=1$. 
This geometric interpretation also illustrates the importance of the third condition for the RTO procedure to be valid: there will be a unique projected vector on the manifold $\{H(v)\}$ for any given $\barxi \sim \N(0,\I_{n+m})$ provided the map $v \mapsto Q^\T H(v)$ is invertible. %

The projection defined by RTO realizes the action of a particular transport map. 
Since the random vector $\barxi \in \R^{n+m}$ is a standard Gaussian and the columns of $Q$ are orthonormal, the projection of $\barxi$, denoted by $\xi \defeq Q^\T \barxi  \in \R^n$, is also a standard Gaussian.
Writing the left hand side of \eqref{eq:rtoequation} compactly as
\[S(\cdot) = Q^\T H(\cdot),\]
the nonlinear system of equations in \eqref{eq:rtoequation} can be expressed as
\begin{equation}\label{eq:RTOmap} S(v) = \xi , \quad \textrm{where\quad}\xi \sim \N(0,\I_n).\end{equation}
This equation describes a deterministic coupling between the target random variable $v \in \R^n$ and the standard Gaussian ``reference'' random variable $\xi \in \R^n$. 
The coupling is defined by the forward model, the data, the observational noise, and the prior, through the function $H$ and the matrix $Q$.

Under the conditions in Assumption \ref{assumoverall}, solving the nonlinear system \eqref{eq:RTOmap} implicitly inverts the transport map $S$; that is, it evaluates $S^{-1}$ on each $\xi$, to obtain a proposal $v = S^{-1}(\xi)$. 
The normalized probability density of $v$ generated by RTO is given by the pushforward density of the $n$-dimensional standard Gaussian under the mapping $S^{-1}$:
\begin{align}
\pi_\tx{RTO}(v) %
&= \lb| \det \Jac S(v) \rb| \pi_\tx{ref}\lb(S(v)\rb)\nonumber\\
 &= (2\pi)^{-\frac n2} \lb| \det \lb( Q^\T \Jac H(v) \rb) \rb| \exp\lb(- \frac12\lb\|Q^\T H (v)\rb\|^2\rb)\label{eq:RTOproposal}.
\end{align}

As shown in \cite{rto}, %
RTO's proposal is exact (i.e., is the target) when the forward model is linear, and its proposal is expected to be close to the target when the forward model is close to linear. For weakly nonlinear problems, the proposal can be a good approximation to the posterior and hence can be used in MCMC and importance sampling.
These proposal samples can be used either as an independent proposal in Metropolis--Hastings (MH) or as a biasing distribution in importance sampling. For the former case, the Metropolis--Hastings acceptance ratio can be written as
\[
\frac{\pi_\tx{tar}(\vpi)\,\pi_\tx{RTO}(v^{(i-1)})}{\pi_\tx{tar}(v^{(i-1)})\,\pi_\tx{RTO}(\vpi)} =\frac{w(\vpi)}{w(v^{(i-1)})},
\]
where the weight $w(v)$ is defined as
\begin{equation}\label{eq:w(theta)}
w(v) 
=  \lb| \det\lb( Q^\T \Jac{H}(v)\rb) \rb| ^{-1} \exp\lb(-\frac12 \lb\| H(v)\rb\| ^2+\frac12 \lb\| Q^\T H(v) \rb\| ^2\rb).
\end{equation}
The resulting method, called RTO--MH, is summarized in Algorithm~\ref{alg:rto}.

For importance sampling, since the normalizing constant of the target density is unknown, the weights must be normalized as
\[
\tilde w(v^{(i)}) = w(v^{(i)})/\sum_{j=1}^{N} w(v^{(j)}),
\] 
where $N$ is the number of samples and the sum of weights $\tilde w(v^{(i)})$ is thus one.
The proposal samples and weights can then be used to compute posterior expectations of some quantity of interest $g(v)$ using the self-normalizing importance sampling formula: 
\[ \int g(v) \pi_\tx{tar} (v) \dd v = \sum_{i=1}^{N} \tilde w(\vpi) g(\vpi) . \]

\begin{algorithm}[h!t]
\caption{RTO Metropolis--Hastings (RTO-MH)} \label{alg:rto}
\begin{algorithmic}[1]
\State Find $v_\tx{ref}$ using \eqref{eq:opmode}
\State Determine $\Jac{H}(v_\tx{ref})$
\State Compute $Q$, whose columns are an orthonormal basis for the range of $\Jac{H}(v_\tx{ref})$
\For {$i = 1, \ldots, n_\tx{samps}$} in parallel
  \State Sample $\barxi^{(i)}$ from an $(n+m)$--dimensional standard normal distribution
  \State Solve for a proposal sample $\vpi$ using \eqref{eq:optprob}
  \State Compute $w(\vpi)$ from \eqref{eq:w(theta)}
\EndFor

\State Set $v^{(0)} = v_\tx{ref}$
\For {$i = 1, \ldots, n_\tx{samps}$} in series
\State Sample $t$ from a uniform distribution on [0,1]
\If {$t < \lb. w(\vpi) \rb/ w(v^{(i-1)})$}
  \State $v^{(i)}$ = $\vpi$
\Else
  \State $v^{(i)}$ = $v^{(i-1)}$
\EndIf
\EndFor
\end{algorithmic}
\end{algorithm}

\begin{remark}
For Bayesian inverse problems, the RTO formulation presented here is limited to cases with Gaussian prior and Gaussian observation noise. By transforming non-Gaussian prior densities and/or observation noises into Gaussian densities, this limitation may be relaxed. See \cite{l1prior,dim_robust} for examples.
\end{remark}

Similarly to RTO, other optimization-based samplers such as random-map implicit sampling \cite{implicit} and Metropolized RML \cite{auRML} also use a standard normal as the reference distribution and push forward this distribution through some deterministic transformation. Each of these samplers specifies a different inverse transport $S$, as in \eqref{eq:RTOmap}, and then solves an optimization problem to evaluate $S^{-1}$ on each reference sample.
For all three algorithms, the pushforward of the reference distribution can be used as a proposal distribution in Metropolis--Hastings or as a biasing distribution in importance sampling. 
A summary of each algorithm's mapping is given in Appendix \ref{sec:app_imp}.
The subspace acceleration strategies and infinite-dimensional formulation of RTO developed in this work may also benefit implicit sampling and RML. 
In addition, interpreting optimization-based samplers as transport maps and utilizing the importance sampling formula naturally open the door to constructing multilevel \cite{multilevelpath,multilevel} and multi-fidelity \cite{multifidelity} Monte Carlo estimators for Bayesian computation, enabling additional speedups.
Further research along this direction is in \cite{chen_2019}.

\section{Scalable implementation of RTO}\label{sec:scalable}
In high-dimensional problems, the computation cost of operations involving the dense matrix $Q$ in RTO poses a major computational challenge:
The matrix-vector product with $Q^\T$ in each evaluation of the objective function in \eqref{eq:optprob} costs $\mathcal{O}((n+m)\times n)$ floating point operations, where $n$ is the number of parameters and $m$ is the number of observed data. 
Assembling the matrix $Q^\T \nabla H(v)$ also requires $n + m$ matrix-vector products, and an additional $\mathcal{O}(n^3)$ floating point operations are needed to compute the determinant in the proposal density \eqref{eq:RTOproposal}. 
For high-dimensional parameters, these operations are computationally prohibitive to apply. 
To overcome this challenge, we introduce a new subspace acceleration strategy to make these RTO operations scale linearly with the parameter dimension.

\subsection{Subspace acceleration}
Our scalable implementation avoids computing and storing the QR factorization of the full-rank $(n+m) \times n$ matrix $\Jac H(v_\tx{ref})$. Instead, it opts to construct (and store) a singular value decomposition (SVD) of the smaller $m\times n$ linearized forward model $\Jac G(v_\tx{ref})$. 
To begin, we note from the definition \eqref{eq:H} of $H$ that
\[ 
\Jac H(v) = \begin{bmatrix}\I \\ \Jac G(v)\end{bmatrix}.
\]
Recall from RTO's mapping \eqref{eq:RTOmap} that the RTO proposal samples are found by
\begin{align*} 
Q^\T H(v) = \xi , \quad \textrm{where\quad}\xi \sim \N(0,\I_n), 
\end{align*}
where the columns of $Q$ form an orthonormal basis for the range of $\Jac H(v_\tx{ref})$ and $Q$ is computed from the thin QR decomposition of $\Jac H(v_\tx{ref})$. 
Since the 2-norm used in the objective function \eqref{eq:optprob}, the determinant in \eqref{eq:RTOproposal}, and the standard Gaussian used in the RTO's mapping \eqref{eq:RTOmap} are all invariant up to a rotation defined by an orthogonal matrix, any orthonormal basis for the range of $\Jac H(v_\tx{ref})$ plays the same role in RTO. 
This offer a viable way to avoiding computing the dense $(m+n)\times n$ matrix $Q$.

Instead of computing the QR factorization of $\nabla H$, we consider the polar decomposition \cite{golub}:
\[
\Jac H(v_\tx{ref}) = \widetilde Q \, \left(\Jac H(v_\tx{ref}) ^\T \Jac H(v_\tx{ref}) \right)^{1/2},
\]
where the matrix $\widetilde Q \in \R^{(m+n)\times n}$ has orthonormal columns and the matrix $\left(\Jac H(v_\tx{ref}) ^\T \Jac H(v_\tx{ref}) \right)^{1/2} \in \R^{n\times n}$ is positive definite by construction. This way, the matrix $\widetilde Q$ can be constructed as
\[
\widetilde Q = \Jac H(v_\tx{ref}) \left(\Jac H(v_\tx{ref}) ^\T \Jac H(v_\tx{ref}) \right)^{-\frac12}.
\]
In the above equation, the matrix $\Jac H(v_\tx{ref}) ^\T \Jac H(v_\tx{ref})$ is the Gauss-Newton approximation of the Hessian of the log-posterior density (referred to as Gauss-Newton Hessian hereafter) defined at the reference parameter $v_\tx{ref}$.

\begin{proposition}
Let $\Jac G(v_\tx{ref})$ denote the forward model linearized at parameter $v_\tx{ref}$, and consider its reduced SVD, 
\[
\Jac G(v_\tx{ref}) = \Psi \Lambda \Phi^\T .
\] 
The nonlinear system $\widetilde Q^\T H(v) = \xi$ defining the RTO mapping can be rewritten as
\begin{equation} \label{eq:scale}
\left\{
\begin{aligned}
(\I_n - \Phi\Phi^\T)\,\xi  &= (\I_n - \Phi\Phi^\T)\,v \\
\Phi\Phi^\T\,\xi &= \Phi\,\Big[(\Lambda^2 + \I_r)^{-\frac12} \big( \Phi^\T  v + \Lambda \Psi^\T G(v) \big)\Big].
\end{aligned}
\right.\end{equation}
The weighting function $w(v)$ in \eqref{eq:w(theta)} can be expressed as
\begin{equation}\label{eq:w_svd}
w(v) =  \lb| \det\lb( \widetilde Q^\T \Jac{H}(v)\rb) \rb| ^{-1} \exp\lb(-\frac12 \lb\| G(v)\rb\| ^2 - \frac12\lb\|\Phi^\T  v\rb\|^2 +\frac12 \lb\| (\Lambda^2 + \I_r)^{-\frac12} \big( \Phi^\T  v + \Lambda \Psi^\T G(v) \big) \rb\| ^2\rb),
\end{equation}
where the determinant takes the simplified form
\begin{equation}
\lb| \det\lb(\widetilde Q^\T \nabla H(v) \rb)\rb| = \lb| \det (\Lambda^2 + \I_r)^{-\frac12} \rb| \lb| \det\lb( \I_r + \Lambda \Psi^\T \nabla G(v) \Phi \rb)\rb|\label{eq:det}.
\end{equation}
\end{proposition}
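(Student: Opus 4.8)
The plan is to work entirely with the polar-decomposition basis $\widetilde Q = \Jac H(v_\tx{ref}) \big(\Jac H(v_\tx{ref})^\T \Jac H(v_\tx{ref})\big)^{-1/2}$ and to exploit the block structure $\Jac H = [\I_n; \Jac G]$ together with the SVD $\Jac G(v_\tx{ref}) = \Psi\Lambda\Phi^\T$. First I would compute the Gauss--Newton Hessian: $\Jac H(v_\tx{ref})^\T \Jac H(v_\tx{ref}) = \I_n + \Jac G(v_\tx{ref})^\T \Jac G(v_\tx{ref}) = \I_n + \Phi\Lambda^2\Phi^\T$. Extending $\Phi$ to an orthonormal basis of $\R^n$, this matrix is diagonalized with eigenvalues $1+\lambda_i^2$ on $\spann\Phi$ and $1$ on its orthogonal complement, so $\big(\I_n + \Phi\Lambda^2\Phi^\T\big)^{-1/2} = (\I_n - \Phi\Phi^\T) + \Phi(\Lambda^2+\I_r)^{-1/2}\Phi^\T$. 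Substituting into the formula for $\widetilde Q$ and then forming $\widetilde Q^\T H(v) = \big(\I_n + \Phi\Lambda^2\Phi^\T\big)^{-1/2} \Jac H(v_\tx{ref})^\T H(v)$, with $\Jac H(v_\tx{ref})^\T H(v) = v + \Jac G(v_\tx{ref})^\T G(v) = v + \Phi\Lambda\Psi^\T G(v)$, gives
\[
\widetilde Q^\T H(v) = \big[(\I_n - \Phi\Phi^\T) + \Phi(\Lambda^2+\I_r)^{-1/2}\Phi^\T\big]\big(v + \Phi\Lambda\Psi^\T G(v)\big).
\]
Next I would split this identity by applying the complementary projectors $\I_n - \Phi\Phi^\T$ and $\Phi\Phi^\T$ to both sides of $\widetilde Q^\T H(v) = \xi$. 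On the range of $\I_n - \Phi\Phi^\T$, the term $\Phi\Lambda\Psi^\T G(v)$ is annihilated (it lies in $\spann\Phi$) and the $(\Lambda^2+\I_r)^{-1/2}$ block contributes nothing, leaving $(\I_n - \Phi\Phi^\T)v = (\I_n - \Phi\Phi^\T)\xi$, the first line of \eqref{eq:scale}. On $\spann\Phi$ one gets $\Phi(\Lambda^2+\I_r)^{-1/2}\big(\Phi^\T v + \Lambda\Psi^\T G(v)\big) = \Phi\Phi^\T\xi$, which is the second line. This establishes \eqref{eq:scale}.

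For the weighting function, I would start from \eqref{eq:w(theta)} with $Q$ replaced by $\widetilde Q$ (justified by the rotation-invariance remark already in the text, since both $Q$ and $\widetilde Q$ span the range of $\Jac H(v_\tx{ref})$, so $\widetilde Q = Q R$ for some orthogonal $R$ and all three ingredients of $w$ are unchanged). Then $\tfrac12\|H(v)\|^2 = \tfrac12\|v\|^2 + \tfrac12\|G(v)\|^2$ and, using the decomposition derived above, $\|\widetilde Q^\T H(v)\|^2 = \|(\I_n-\Phi\Phi^\T)v\|^2 + \|(\Lambda^2+\I_r)^{-1/2}(\Phi^\T v + \Lambda\Psi^\T G(v))\|^2$; subtracting, the $\|(\I_n-\Phi\Phi^\T)v\|^2$ cancels against the projection of $\|v\|^2$ onto the complement of $\spann\Phi$, leaving exactly $-\tfrac12\|G(v)\|^2 - \tfrac12\|\Phi^\T v\|^2 + \tfrac12\|(\Lambda^2+\I_r)^{-1/2}(\Phi^\T v + \Lambda\Psi^\T G(v))\|^2$, i.e.\ the exponent in \eqref{eq:w_svd}. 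Finally, for the determinant \eqref{eq:det}, I would write $\widetilde Q^\T \Jac H(v) = \big(\I_n + \Phi\Lambda^2\Phi^\T\big)^{-1/2}\big(\I_n + \Jac G(v_\tx{ref})^\T \Jac G(v)\big) = \big(\I_n + \Phi\Lambda^2\Phi^\T\big)^{-1/2}\big(\I_n + \Phi\Lambda\Psi^\T \Jac G(v)\big)$, take determinants multiplicatively, and use $\det\big(\I_n + \Phi\Lambda^2\Phi^\T\big)^{-1/2} = \det(\Lambda^2+\I_r)^{-1/2}$ together with the Sylvester-type identity $\det\big(\I_n + \Phi\,(\Lambda\Psi^\T \Jac G(v))\big) = \det\big(\I_r + (\Lambda\Psi^\T \Jac G(v))\,\Phi\big) = \det\big(\I_r + \Lambda\Psi^\T \Jac G(v)\Phi\big)$.

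The routine parts are the algebraic substitutions; the one step demanding care is the reduction of $\big(\I_n + \Phi\Lambda^2\Phi^\T\big)^{\pm 1/2}$ to the projector form $(\I_n-\Phi\Phi^\T) + \Phi(\Lambda^2+\I_r)^{\pm1/2}\Phi^\T$. This rests on $\Phi$ having orthonormal columns so that $\Phi^\T\Phi = \I_r$ and $\Phi\Phi^\T$ is an orthogonal projector commuting with everything in sight; I would verify it either by completing $\Phi$ to a full orthonormal basis and diagonalizing, or by directly checking that the square of the claimed square root reproduces $\I_n + \Phi\Lambda^2\Phi^\T$. Everything else — the projector split of \eqref{eq:scale}, the cancellation in the exponent, and the Sylvester determinant identity — follows mechanically once this structural fact is in place. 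I expect the main obstacle to be purely bookkeeping: keeping track of which terms live in $\spann\Phi$ versus its orthogonal complement, and confirming that $\Psi^\T G(v)$ (the residual $G(v)$ need not lie in $\spann\Psi$) still enters only through $\Lambda\Psi^\T G(v)$, which does.
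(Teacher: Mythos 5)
Your proposal is correct and follows essentially the same route as the paper: the eigendecomposition of the Gauss--Newton Hessian via the SVD of $\Jac G(v_\tx{ref})$, the projector form of $(\Jac H^\T\Jac H)^{\pm 1/2}$, the split of $\widetilde Q^\T H(v)=\xi$ by the complementary projectors, the same norm cancellation in the exponent, and Sylvester's identity for the determinant. The only (harmless) variation is that you factor $\widetilde Q^\T\Jac H(v)$ as $(\I_n+\Phi\Lambda^2\Phi^\T)^{-1/2}(\I_n+\Phi\Lambda\Psi^\T\Jac G(v))$ before applying Sylvester, which is a slightly cleaner bookkeeping of the same determinant computation the paper performs after expanding $\widetilde Q^\T\nabla H(v)$.
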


\begin{proof}
We will show that the matrices in the polar decomposition of $\Jac H(v_\tx{ref})$ can be obtained from the reduced SVD $\Jac G(v_\tx{ref}) = \Psi \Lambda \Phi^\T$. 
The eigendecomposition of the Gauss-Newton Hessian can be written in terms of the reduced SVD as
\begin{equation}
\Jac H(v_\tx{ref}) ^\T \Jac H(v_\tx{ref}) = \Phi(\Lambda^2 + \I_r)\Phi^\T + (\I_n - \Phi\Phi^\T),\label{eq:eig}
\end{equation}
where $\I_r$ and $\I_n$ are the identity matrices of size $r\times r$ and $n\times n$, respectively.
Then, we have the identity
\[ 
\left(\Jac H(v_\tx{ref}) ^\T \Jac H(v_\tx{ref}) \right)^{-\frac12} = \Phi(\Lambda^2 + \I_r)^{-\frac12}\Phi^\T + (\I_n - \Phi\Phi^\T).
\]
After some algebraic manipulation, this leads to the matrix 
\[
\widetilde Q = \begin{bmatrix} \Phi(\Lambda^2 + \I_r)^{-\frac12}\Phi^\T + (\I_n - \Phi\Phi^\T) \\ \Psi\Lambda(\Lambda^2 + \I_r)^{-\frac12}\Phi^\T \end{bmatrix},
\]
and hence we have
\begin{align}
\widetilde Q^\T H(v) &= \left[\Phi(\Lambda^2 + \I_r)^{-\frac12}\Phi^\T + (\I_n - \Phi\Phi^\T) \right] v + \Phi\Lambda(\Lambda^2 + \I_r)^{-\frac12}\Psi^\T G(v)\nonumber\\
&= \Phi \left[(\Lambda^2 + \I_r)^{-\frac12} \big(\Phi^\T  v + \Lambda\Psi^\T G(v) \big)\right] + (\I_n - \Phi\Phi^\T)\, v.\label{eq:svdet1}
\end{align}
Thus the nonlinear system $\widetilde Q^\T H(v) = \xi$ can be rewritten in the form \eqref{eq:scale}.
Replacing $Q$ with $\widetilde Q$ in the weighting function $w(v)$ \eqref{eq:w(theta)}, we have
\[
w(v) =  \lb| \det\lb(\widetilde Q^\T \Jac{H}(v)\rb) \rb| ^{-1} \exp\lb(-\frac12 \lb\| H(v)\rb\| ^2+\frac12 \lb\|\widetilde Q^\T H(v) \rb\| ^2\rb).
\]
Since the matrix $\Phi$ has orthonormal columns, $\Phi\Phi^\T$ and $\I_n - \Phi\Phi^\T$ are orthogonal projectors. This leads to the identities
\begin{align*}
\lb\| H(v)\rb\| ^2 & = \lb\| v\rb\| ^2 + \lb\| G(v)\rb\| ^2 = \lb\| (\I_n - \Phi\Phi^\T) v\rb\| ^2 + \lb\| \Phi^\T v\rb\| ^2 + \lb\| G(v)\rb\| ^2 ,\\
\lb\| \widetilde Q^\T H(v) \rb\|^2 & = \lb\| (\I_n - \Phi\Phi^\T) v\rb\| ^2 + \lb\| (\Lambda^2 + \I_r)^{-\frac12} \big(\Phi^\T  v + \Lambda\Psi^\T G(v) \big) \rb\|^2,
\end{align*}
by the definition of $H(v)$ in \eqref{eq:H} and the definition of $\widetilde Q^\T H(v)$ in \eqref{eq:svdet1}.
Substituting the above identities into $w(v)$, we obtain the result in \eqref{eq:w_svd}.
Using (\ref{eq:svdet1}), we obtain the linearization
\[
\widetilde Q^\T \nabla H(v) = \I + \Phi\lb[ (\Lambda^2 + \I_r)^{-\frac12} \Phi^\T  - \Phi^\T + \Lambda (\Lambda^2 + \I_r)^{-\frac12} \Psi^\T \nabla G(v)\rb]. 
\]
Hence, the determinant term is given by
\begin{align*}
\lb| \det\lb(\widetilde Q^\T \nabla H(v) \rb)\rb|
	&= \lb| \det \lb(\I + \Phi\lb[ (\Lambda^2 + \I_r)^{-\frac12} \Phi^\T  - \Phi^\T + \Lambda (\Lambda^2 + \I_r)^{-\frac12} \Psi^\T \nabla G(v)\rb] \rb) \rb|\nonumber\\
	&= \lb| \det \lb(\I_r + \lb[ (\Lambda^2 + \I_r)^{-\frac12} \Phi^\T  - \Phi^\T + \Lambda (\Lambda^2 + \I_r)^{-\frac12} \Psi^\T \nabla G(v)\rb] \Phi \rb)\rb|\nonumber\\
	&= \lb| \det (\Lambda^2 + \I_r)^{-\frac12} \rb| \lb| \det\lb( \I_r + \Lambda \Psi^\T \nabla G(v) \Phi \rb)\rb|,
\end{align*}
where in the second line above we use Sylvester's determinant identity.
This concludes the proof.
\end{proof}

\begin{remark}
For high dimensional problems, it is not feasible to explicitly construct the linearized forward model  $\nabla G(v)$. Instead, one should use matrix-free solvers such as Lanczos or randomized SVD (see \cite{golub,halko} and references therein) to compute the SVD of $\nabla G(v)$. This only involves evaluating matrix-vector products (MVPs) with $\nabla G(v)$ and its adjoint.
\end{remark}

Equation (\ref{eq:scale}) separates $\xi$ into two parts: one in the column space of $\Phi$ and another in its orthogonal complement. 
Defining
\[
v_r = \Phi^\T v, \quad \tx{and} \quad v = \Phi v_r + v_\perp,
\]
where $v_\perp$ is an element in the orthogonal complement of $\tx{range}(\Phi)$, we can solve the nonlinear system of equations \eqref{eq:scale} by first computing
\begin{equation}\label{eq:svd1}
v_\perp = (\I_n - \Phi\Phi^\T)\,\xi, \end{equation}
and then solving the $r$--dimensional optimization problem
\begin{equation}\label{eq:svd2}
 v_r = \argmin_{v_r'} \left\| (\Lambda^2 + \I_r)^{-\frac12} \big( v_r' + \Lambda \Psi^\T G\left( v_\perp + \Phi v_r'\right) - \Phi^\T \xi \big)\right\|^2. \end{equation}
Equations (\ref{eq:svd1}) and (\ref{eq:svd2}) replace the $n$--dimensional optimization problem in \eqref{eq:optprob}.
Note that at each given $v = v_\perp + \Phi v_r$, the vector-valued function within the 2-norm in \eqref{eq:svd2} has the linearization
\begin{equation}\label{eq:svd3}
(\Lambda^2 + \I_r)^{-\frac12} \big( \I_r + \Lambda\Psi^\T \, \nabla G(v)\Phi\big),
\end{equation}
w.r.t.~the reduced-dimensional parameter $v_r$. 
MVPs with the linearization \eqref{eq:svd3} and its adjoint are needed by nonlinear optimization algorithms, e.g., quasi-Newton with line search or trust region with inexact Newton--CG \cite{nocedal}, to solve \eqref{eq:svd2}.
The scalable implementation of RTO is outlined in Algorithm~\ref{alg:scale}.

\begin{algorithm}[htbp]
\caption{Scalable implementation of RTO--MH} \label{alg:scale}{}
\begin{algorithmic}[1]
\State Find $v_\tx{ref}$ using \eqref{eq:opmode}.
\State Determine the Jacobian matrix of the forward model, $\Jac{G}(v_\tx{ref})$.
\State Compute $\Psi$, $\Lambda$ and $\Phi$, which is the SVD of $\Jac{G}(v_\tx{ref})$.
\For {$i = 1, \ldots, n_\tx{samps}$} in parallel
  \State Sample $\xi^{(i)}$ from an $n$--dimensional standard normal distribution.
  \State Solve for a proposal sample $\vpi = v_\perp + \Phi v_r$ using \eqref{eq:svd1} and \eqref{eq:svd2}.
  \State Compute $w(\vpi)$ from \eqref{eq:w_svd} using the determinant from \eqref{eq:det}.
\EndFor

\State Set $v^{(0)} = v_\tx{ref}$.
\For {$i = 1, \ldots, n_\tx{samps}$} in series
\State Sample $t$ from a uniform distribution on [0,1].
\If {$t < \lb. w(\vpi) \rb/ w(v^{(i-1)})$}
  \State $v^{(i)}$ = $\vpi$.
\Else
  \State $v^{(i)}$ = $v^{(i-1)}$.
\EndIf
\EndFor
\end{algorithmic}
\end{algorithm}

\subsection{Computational complexity and rank truncation}\label{sec:rto_complexity}
The computational cost of the scalable RTO implementation derived above has two major sources. \emph{First}, 
producing each RTO sample requires several optimization iterations. Each optimization iteration may evaluate the RTO objective function in \eqref{eq:svd2}, and MVPs with the linearization \eqref{eq:svd3} and its adjoint, several times. These operations require evaluating the forward model, the actions of the linearized forward model and its adjoint, and the actions of the matrices $\Phi$ and $\Psi$ several times. \emph{Second}, for each RTO sample, we need to evaluate the determinant in \eqref{eq:det} once to compute the weighting function. This in turn involves evaluating $r$ MVPs with $\nabla G(v)$ and computing the determinant of a $r \times r$ matrix. 

The following proposition summarizes the computational complexity of the operations involved in  Algorithm~\ref{alg:scale}.

\begin{proposition}\label{prop:complexity}
We adopt the following assumptions on the computation of each RTO sample to establish the computational complexity of the scalable implementation of RTO.
\begin{enumerate}
	\item On average, $k_\tx{opt}$ optimization iterations are needed to obtain each RTO sample. On average, $k_\tx{obj}$ objective function evaluations and $k_\tx{adj}$ MVPs with the linearization \eqref{eq:svd3} and its adjoint are needed within each optimization iteration. 
	\item The number of floating  point operations required to evaluate the forward model $G(v)$ is a function of the dimension of the discretized parameters, denoted by $C_1(n)$.
	\item The number of floating point operations required to compute an MVP with the linearized forward model $\nabla G(v)$ and its adjoint is a function of the dimension of the discretized parameters, denoted by $C_2(n)$.
	\item The data dimension is less than the parameter dimension, i.e., $m < n$.
\end{enumerate}
Then, counting the floating point operations needed to evaluate the objective function \eqref{eq:svd2} and the action of the linearization \eqref{eq:svd3}, the number of floating point operations needed for each optimization iteration is
\[
\mathcal{O}\big( (k_\tx{obj} + k_\tx{adj}) (m\,r + n\,r) \big)+ k_\tx{obj}\, C_1(n) + k_\tx{adj}\,C_2(n).
\]
The number of floating point operations needed to evaluate the determinant in \eqref{eq:det} is $\mathcal{O}(m\,r^2 + r^3) + r\,C_2(n)$. Thus, a total of 
\[
\mathcal{O}\big( k_\tx{opt}\,(k_\tx{obj} + k_\tx{adj}) (m\,r + n\,r) + m\,r^2 + r^3\big)+ k_\tx{opt}\,k_\tx{obj}\, C_1(n) + (k_\tx{opt}\,k_\tx{adj} + r) \,C_2(n),
\]
floating point operations are needed to compute one RTO sample, where the big--$\mathcal{O}$ term above refers to the total linear algebra cost, and the other terms refer to the total cost of evaluating $G(v)$ and the actions of $\nabla G(v)$.
\end{proposition}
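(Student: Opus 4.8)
The proof is essentially a careful operation count. The cost of producing one RTO sample splits into the $k_\tx{opt}$ iterations of the reduced-dimensional optimization \eqref{eq:svd2} and the single evaluation of the determinant \eqref{eq:det} used to form the weight $w(v)$. The plan is to bound (i) the cost of one evaluation of the objective in \eqref{eq:svd2}, (ii) the cost of one matrix-vector product (MVP) with the linearization \eqref{eq:svd3} or its adjoint, (iii) the cost of the determinant \eqref{eq:det}, and then to combine (i)--(iii) using the stated averages $k_\tx{obj}$, $k_\tx{adj}$, $k_\tx{opt}$. No deep argument is needed; the only real care is to make sure no hidden cost is dropped and to confirm that everything not listed is of lower order.

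For the per-iteration cost, given a trial $v_r'$, forming $v = v_\perp + \Phi v_r'$ is one MVP with $\Phi \in \R^{n\times r}$ at cost $\mathcal{O}(n\,r)$; evaluating $G(v)$ costs $C_1(n)$; applying $\Psi^\T \in \R^{r\times m}$ costs $\mathcal{O}(m\,r)$; and the remaining operations in \eqref{eq:svd2} (the diagonal scalings by $\Lambda$ and $(\Lambda^2+\I_r)^{-1/2}$, the vector additions, and the final norm) are all $\mathcal{O}(r)$, while $v_\perp$ and $\Phi^\T\xi$ are computed once per sample at cost $\mathcal{O}(n\,r)$. Hence one objective evaluation costs $\mathcal{O}(m\,r + n\,r) + C_1(n)$. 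Likewise, applying \eqref{eq:svd3} or its adjoint to an $r$-vector requires one application of $\Phi$ ($\mathcal{O}(n\,r)$), one MVP with $\nabla G(v)$ or $\nabla G(v)^\T$ ($C_2(n)$), one application of $\Psi^\T$ or $\Psi$ ($\mathcal{O}(m\,r)$), and $\mathcal{O}(r)$ further work, i.e.\ $\mathcal{O}(m\,r + n\,r) + C_2(n)$ per MVP. Multiplying by the assumed $k_\tx{obj}$ objective evaluations and $k_\tx{adj}$ linearization MVPs per iteration, and noting that the optimizer's internal bookkeeping (gradient assembly, quasi-Newton or CG updates in $r$ dimensions) is $\mathcal{O}(r^2)$ at worst and hence dominated since $r \le \min(m,n) = m < n$, yields the per-iteration bound $\mathcal{O}\big((k_\tx{obj}+k_\tx{adj})(m\,r+n\,r)\big) + k_\tx{obj}\,C_1(n) + k_\tx{adj}\,C_2(n)$.

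For the determinant \eqref{eq:det}, the factor $|\det(\Lambda^2+\I_r)^{-\frac12}|$ is a product of $r$ scalars, $\mathcal{O}(r)$. For $|\det(\I_r + \Lambda\Psi^\T\nabla G(v)\Phi)|$, forming $\nabla G(v)\Phi$ takes one MVP with $\nabla G(v)$ per column of $\Phi$, i.e.\ $r\,C_2(n)$; the product $\Psi^\T(\nabla G(v)\Phi)$ is an $(r\times m)\times(m\times r)$ multiplication, $\mathcal{O}(m\,r^2)$; the left scaling by $\Lambda$ and addition of $\I_r$ are $\mathcal{O}(r^2)$; and the determinant of the resulting $r\times r$ matrix via an LU factorization is $\mathcal{O}(r^3)$. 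This gives $\mathcal{O}(m\,r^2 + r^3) + r\,C_2(n)$. Assembling $k_\tx{opt}$ iterations plus one determinant evaluation and collecting the linear-algebra terms under a single $\mathcal{O}(\cdot)$ then gives
\[
\mathcal{O}\big( k_\tx{opt}\,(k_\tx{obj}+k_\tx{adj})(m\,r+n\,r) + m\,r^2 + r^3\big) + k_\tx{opt}\,k_\tx{obj}\,C_1(n) + (k_\tx{opt}\,k_\tx{adj}+r)\,C_2(n),
\]
as claimed. The one subtlety worth flagging — and the place a naive count goes wrong — is that the determinant step contributes an \emph{additional} $r$ applications of the linearized forward model beyond those used inside the optimization, which is precisely why the coefficient of $C_2(n)$ is $k_\tx{opt}\,k_\tx{adj}+r$ rather than $k_\tx{opt}\,k_\tx{adj}$; the assumption $m<n$ (together with $r\le\min(m,n)$) is used only to decide which of the competing linear-algebra monomials must be retained and to absorb the optimizer overhead.
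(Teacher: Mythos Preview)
The paper states this proposition without proof, treating the operation count as self-evident from the structure of \eqref{eq:svd2}, \eqref{eq:svd3}, and \eqref{eq:det}; your argument is correct and supplies exactly the careful bookkeeping the paper omits. Your identification of the $r$ extra applications of $\nabla G(v)$ in the determinant step as the source of the $+r$ in the coefficient of $C_2(n)$, and your use of $r\le m<n$ to absorb lower-order terms, are both on point.
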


Without loss of generality, the dimension $n$ of the parameters is often proportional to the number of degrees of freedom of the discretized forward model, and thus the functions $C_1(n)$ and $C_2(n)$ are often {\it linear} or {\it quasilinear} for scalable forward solvers, e.g., full multigrid solvers or preconditioned Krylov solvers.
In this case, the computational complexity of each optimization iteration is dictated by the cost of solving the forward model and evaluating actions with its linearization. Similarly, the computational complexity of evaluating the determinant in \eqref{eq:det} is dictated by the cost of the MVP with the linearized forward model. 
In contrast, without subspace acceleration, the complexity of computing the original objective function in \eqref{eq:optprob} is quadratic in $n$, the complexity of computing the action of the linearization $Q^\top \nabla H(v)$ is also quadratic in $n$, and the complexity of computing the determinant in the weighting function \eqref{eq:w(theta)} is cubic in $n$, since a dense matrix $Q \in \R^{(m+n)\times n}$ is involved. 
The cost of operating with the matrix $Q$ will thus dominate the overall computational cost of standard RTO for high-dimensional problems.
Subspace acceleration therefore significantly reduces the computational complexity of minimizing the RTO objective and calculating the determinant for each proposal. In addition, the size of the optimization problem in \eqref{eq:svd2} is also reduced to the intrinsic rank $r$. 

\paragraph{Rank truncation} For many inverse problems, the singular values of $\Jac G(v_\tx{ref})$ decay quickly, as a consequence of a smoothing forward operator, noisy observations, and the correlation structure of the prior (where some smoothness is necessary to make the Bayesian inverse problem well-posed \cite{StuartActa}). 
This fact is often used to reduce the parameter dimension of inverse problems by truncating the equivalent eigendecomposition \eqref{eq:eig} (cf.~\cite{bui,flath,spantini}), and hence to accelerate Markov chain Monte Carlo algorithms \cite{dili,lis,martin} and to approximate posterior distributions \cite{bui,flath,jointredu,zahm}.

Using intuition derived from optimal posterior approximations in linear Bayesian inverse problems \cite{spantini}, we can derive heuristics for truncating the SVD in scalable RTO. This can be particular useful for cases where data is abundant, i.e., when $y\in\R^m$ is a large vector. 
Suppose we have a linear inverse problem, that is, $G(v) = Gv$ and $H(v) = Hv$. Computing the reduced SVD $\nabla G(v) \equiv G = \Psi \Lambda \Phi^\T$, the inverse of the posterior covariance is given by the Gauss-Newton Hessian, which has the eigendecomposition\footnote{The linearization $\Jac H(v)$ does not depend on $v$ for linear inverse problems. We use this notation for consistency with the nonlinear case. }
\[ 
\Jac H(v_\tx{ref}) ^\T \Jac H(v_\tx{ref})  = \Phi (\Lambda^2 + \I_r) \Phi^\T + (\I_n - \Phi\Phi^\T).
\]
The subspace spanned by $\Phi$ contains the parameter directions where the posterior differs from the prior, since the prior (on the whitened variable $v$) has identity covariance matrix $\I_n$. 
A small singular value $\lambda_i$ implies that, along the corresponding right singular vector $\phi_i$, the variance reduction from prior to posterior is small; in particular, the ratio of posterior to prior variance is nearly one \cite{spantini}. We can thus neglect parameter directions corresponding to small singular values by truncating the SVD of $\Jac G(v_\tx{ref})$. Suppose that the truncation rank is $t < r$; this leads to an approximate eigendecomposition in the form of 
\begin{equation}
\Jac H(v_\tx{ref}) ^\T \Jac H(v_\tx{ref}) \approx \Phi_t (\Lambda_t^2 + \I_t)\Phi_t^\T + (\I_n - \Phi_t^{}\,\Phi_t^\T),\label{eq:eig_trunc}
\end{equation}
where $\Phi_t \in \R^{n \times t}$ and $\Lambda_t \in \R^{t \times t}$ consist of the leading $t$ right singular vectors and singular values, respectively.
In the linear case, the RTO proposal is a Gaussian distribution with the covariance matrix given by the inverse of the truncated approximation in \eqref{eq:eig_trunc}; this result directly follows from \eqref{eq:svdet1}. 
As shown in \cite{spantini}, the inverse of the approximation in \eqref{eq:eig_trunc} is also an optimal approximation to the posterior covariance with respect to the natural (geodesic) distance on the manifold of symmetric positive definite matrices. 
In this situation, truncating the SVD of $\Jac G(v_\tx{ref})$ for singular values that are smaller than one, e.g., $10^{-2}$ or $10^{-3}$, yields negligible impact on the RTO proposal. 

In nonlinear settings, we can adopt the same truncation strategy as a heuristic. The truncation will change RTO's map \eqref{eq:scale} and the resulting proposal distribution. 
Figure~\ref{fig:trunc_nonlin} shows the effect on RTO's proposal of truncating the SVD, in a toy example with a nonlinear forward model and a standard normal prior. 
Truncation restricts the role of the data misfit term in the construction of the proposal distribution. As the rank $r$ decreases, the proposal distribution becomes broader. In the extreme case, when $r$ is truncated to zero, RTO's proposal reverts to the prior. Note, however, the non-Gaussianity of the RTO proposal for $r \geq 1$ in this nonlinear example. We will evaluate the impact of truncation on MCMC sampling efficiency in subsequent numerical examples.

\begin{figure}[ht]
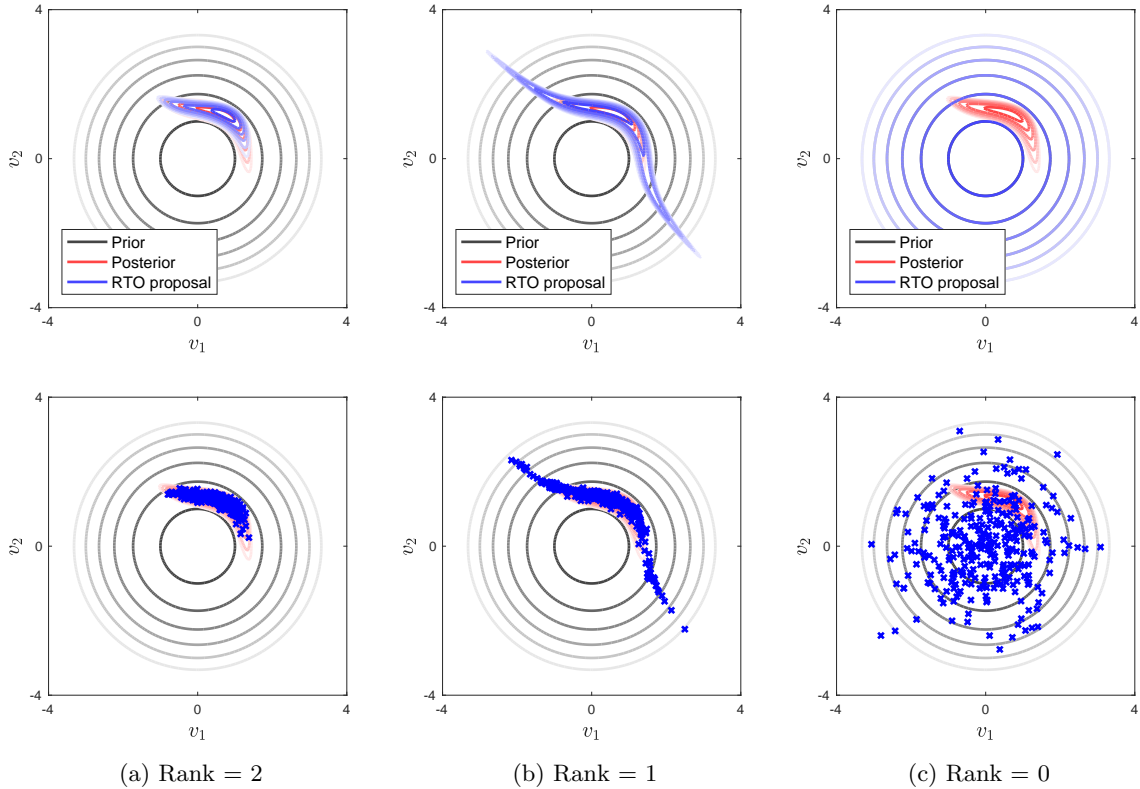

\centering
\begin{subfigure}{.32\textwidth}
\centering	
\pic{1}{pics/nonlin_r2_dens}\\\pic{1}{pics/nonlin_r2_samp}
\caption{Rank = $2$}
\end{subfigure}
\begin{subfigure}{.32\textwidth}
\centering
\pic{1}{pics/nonlin_r1_dens}\\\pic{1}{pics/nonlin_r1_samp}
\caption{Rank = $1$}
\end{subfigure}
\begin{subfigure}{.32\textwidth}
\centering
\pic{1}{pics/nonlin_r0_dens}\\\pic{1}{pics/nonlin_r0_samp}
\caption{Rank = $0$}
\end{subfigure}
\caption{Truncating the SVD in a two-dimensional toy example with a nonlinear forward model and standard normal prior. Top: contours of the prior, posterior and RTO's proposal density. Bottom: contours of the prior and posterior densities, and samples from RTO's proposal.}\label{fig:trunc_nonlin}
\end{figure}

\section{RTO on function space}\label{sec:proof}
The scalable implementation presented in Section \ref{sec:scalable} ensures that the computational cost of generating each RTO sample is dictated by the cost of evaluating the forward model and the adjoint model. 
When applying RTO as an independent proposal in the MH algorithm, or as the biasing distribution in self-normalised importance sampling, it is also critical to understand how its statistical performance (for instance, as measured by the acceptance rate of independence MH) depends on the dimension of the discretized parameters. 
To this end, we adopt the function space framework of \cite{StuartActa} to analyze the RTO proposal. We will focus on the case of applying RTO within MCMC, though the analysis can easily be adapted to importance sampling.
In this section, we will first provide background on MCMC in the function space setting, then interpret RTO's mapping in function space, and conclude by establishing sufficient conditions such that the statistical performance of RTO is invariant to the dimension of discretised parameters. 

\subsection{Function space MCMC}

To be aligned with the framework of \cite{StuartActa}, we will consider the target distribution on the original parameter $u$ (rather than the ``whitened'' parameter $v$), in a function space setting.
To preserve interpretability, we will use the same notation in the function space setting as we do in the finite dimensional setting to represent the parameters, prior mean, and prior covariance.
One exception is that we will use $\Gamma_\tx{pr}^{1/2}$ to denote the symmetric square root of the prior covariance operator, which is equivalent to any square root of the prior covariance up to a rotation.

We suppose that the parameter $u$ is an element of a separable Hilbert space $\cH$, endowed with a Gaussian prior measure $\mu_\tx{pr}$ such that the prior covariance $\Gamma_\tx{pr}$ is a self-adjoint, positive definite, and trace-class operator on $\cH$. 
The inner product on $\cH$ is denoted by $\langle \cdot \, , \cdot \rangle_\cH$, with the associated norm denoted by $\| \cdot \|_\cH$. For brevity, where misinterpretation is not possible, we will drop the subscript $\cH$. We assume that the data $y$ remain finite dimensional, i.e., $y \in \mathbb{R}^m$, $\Gamma_\tx{obs} \in \mathbb{R}^{m \times m}$, and $F: \cH \to \mathbb{R}^m$ for $m < \infty$. 
This way, the target probability measure is expressed by the Radon-Nikodym derivative
\[
\frac{\dd \mu_\tx{tar}}{\dd \mu_\tx{pr}}(u) \propto \exp\Big( - \frac12 ( y - F(u) )^\T \Gamma_\tx{obs}^{-1} ( y - F(u) ) \Big),
\]
with respect to the the prior measure.
The Metropolis--Hastings algorithm defines a Markov chain of random functions, asymptotically distributed according to the target measure, in the following way:
Given the current state of the Markov chain, $U^{(k)} = u$, a candidate state $u^{\prime}$ is drawn from a proposal $q(u,\cdot)$.
Define the following pair of measures on $\cH \times \cH$: 
\begin{equation}
\begin{array}{rll}
\label{eq:nus}
\nu(du,du^{\prime}) &=& q(u,du^{\prime}) \mu_\tx{tar}(du) \\
\nu^\bot(du,du^{\prime}) &=& q(u^{\prime},du) \mu_\tx{tar}(du^{\prime}).
\end{array}
\end{equation}
Then, the next state of the Markov chain is set to $U^{(k+1)} =
u^{\prime}$ with probability
\begin{equation} 
\alpha(u,u^{\prime}) = \min 
\Big \{1, 
\frac{d\nu^\bot}{d\nu}(u,u^{\prime})
 \Big \} , 
\label{acceptance} 
\end{equation}
and to $U^{(k+1)}=u$ otherwise.

For a continuously differentiable (as in Assumption~\ref{assumoverall}) and sufficiently bounded (as defined in Assumption 2.7 of \cite{StuartActa}) forward model $F$,
\cite{StuartActa} shows that the target measure is dominated by the prior measure.
As a result, refinements of the corresponding finite-dimensional target measure (induced by refinements of the parameter discretization) will converge to an infinite-dimensional limit. 
To make the acceptance probability of MH then invariant to parameter discretization, i.e., convergent to some positive infinite-dimensional limit and hence yielding a valid transition kernel \cite{MCMC:Tierney_1998}, we require the absolute continuity condition $\nu^\bot \ll \nu$.
We will refer to a MH algorithm as {\it well-defined} if this absolute continuity condition holds. 
Note that many Markov chain Monte Carlo methods designed for finite dimensional problems may not be well-defined on $\cH$---they may have vanishing acceptance probability and vanishing effective sample size with increasing parameter discretization dimension \cite{convrwm,mala}. 
For example, the acceptance probability of an MH algorithm using the standard random walk proposal scales as $\mathcal{O}(n^{-1})$ with parameter dimension, and thus it is not suitable for high-dimensional problems. 
We aim to show that MH with an RTO proposal is well-defined on $\cH$.

\subsection{RTO mapping in function space}
Recall that the Cameron--Martin space associated with the prior measure $\mu_\tx{pr}$, $\cHcm = \Gamma_\tx{pr}^{1/2}\,\cH  \subset \cH$, is equipped with the inner product
\[
\langle a, b \rangle_{\cHcm} \defeq \langle a, b \rangle_{\Gamma_\tx{pr}^{-1}} = \lb\langle \Gamma_\tx{pr}^{-\frac12}\, a, \Gamma_\tx{pr}^{-\frac12}\,b \rb\rangle, \]
for any $a, b \in \cHcm$.
Here we will show that a sample generated by the RTO mapping is a modification of a random function drawn from the prior measure along a finite dimensional subspace of the Cameron--Martin space.

We first consider the properties of the rank-$r$ reduced SVD of the linearized forward model $\Jac G(v_\tx{ref}) = \Psi \Lambda \Phi^\T$ in the function space setting. 
The right singular vectors $\phi_1, \phi_2, \ldots, \phi_r$ are also eigenfunctions of the eigenvalue problem
\[
\Jac G(v_\tx{ref})^\natural \Jac G(v_\tx{ref}) \, \phi_i = \lambda_i \phi_i, \quad i=1,\ldots, r,
\]
where $\Jac G(v_\tx{ref})^\natural$ denotes the adjoint of the operator $\Jac G(v_\tx{ref})$.
Recalling the whitening transform introduced in Section \ref{sec:target}, we have 
\[
\Jac G(v_\tx{ref})^\natural \Jac G(v_\tx{ref}) = \Gamma_\tx{pr}^{\frac12} \Jac F\lb(u_\tx{ref}\rb)^\natural \Gamma_\tx{obs}^{-1} \Jac F\lb(u_\tx{ref}\rb) \Gamma_\tx{pr}^{\frac12},
\]
where $ \Jac F : \cH \rightarrow \R^m$ is the Fr\'{e}chet derivative of the forward model and $\Jac F\lb(u_\tx{ref}\rb)^\natural$ is its adjoint. 
Defining a new set of functions 
\begin{equation}
\chi_i = \Gamma_\tx{pr}^{1/2} \phi_i,
\end{equation} 
we also have an equivalent eigenvalue problem
\[
\Gamma_\tx{pr} \Jac F\lb(u_\tx{ref}\rb)^\natural \Gamma_\tx{obs}^{-1} \Jac F\lb(u_\tx{ref}\rb) \, \chi_i = \lambda^2_i \chi_i, \quad i=1,\ldots, r.
\]
Since the operator $\Jac F\lb(u_\tx{ref}\rb)^\natural \Gamma_\tx{obs}^{-1} \Jac F\lb(u_\tx{ref}\rb)$ is self-adjoint and has finite rank $r \leq m$ in the case of finite-dimensional data ($m < \infty$), we have that the eigenfunctions $\chi_i \in \Gamma_\tx{pr} \cH$ and that the right singular vectors  $\phi_i \in \Gamma_\tx{pr}^{1/2} \cH$, for $i=1,\ldots, r$.
\begin{remark}
Both $\{\chi_1, \chi_2, \ldots, \chi_r \}$ and $\{\phi_1, \phi_2, \ldots, \phi_r \}$ span finite dimensional subspaces
in the \\Cameron--Martin space.
The basis functions $\{\phi_1, \phi_2, \ldots, \phi_r \}$ are orthogonal with respect to the inner product $\langle\cdot, \cdot\rangle$, whereas the basis functions $\{\chi_1, \chi_2, \ldots, \chi_r \}$ are orthogonal with respect to the Cameron--Martin inner product $\langle\cdot, \cdot\rangle_{\Gamma_\tx{pr}^{-1}}$.
\end{remark}

In Figure~\ref{fig:quadrv}, we specify the relationship between four random variables: $u, \zeta \in \cH$, and $v, \xi\in  \Gamma_\tx{pr}^{-\frac12}\cH$, where $\zeta$ is a newly defined random variable distributed according to the prior. 
Here we use $u$ and $v$ to denote random variables distributed according to the unwhitened and whitened RTO measures, respectively, rather than the corresponding target measures. 
This way, we have the identity
\[
\langle v, \phi_i \rangle = \langle \Gamma_\tx{pr}^{-\frac12} (u - m_\tx{pr}) , \Gamma_\tx{pr}^{-\frac12} \chi_i \rangle = \langle u - m_\tx{pr} , \chi_i \rangle_{\Gamma_\tx{pr}^{-1}},
\]
for any $v \in \Gamma_\tx{pr}^{-\frac12}\cH$ and any right singular vector $\phi_i \in \Gamma_\tx{pr}^{1/2}\cH$.

\begin{figure}[h]
\centering
\pic{1}{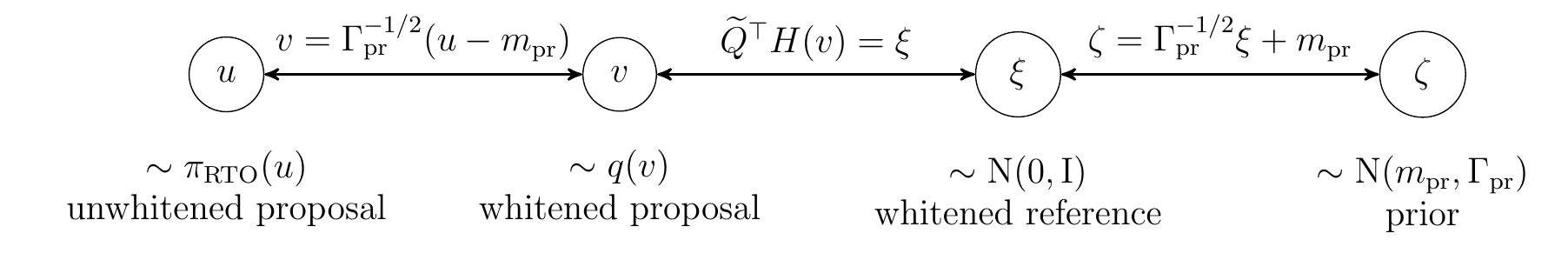}
\caption{Relationship between four random variables, $u, \zeta \in \cH$, $v, \xi\in  \Gamma_\tx{pr}^{-\frac12}\cH$.}\label{fig:quadrv}
\end{figure}%

Given the basis functions $\{\chi_1, \chi_2, \ldots, \chi_r \}$, we introduce a linear map $R : \cH \rightarrow \R^r$ whose components are
\[
R_i(u) = \langle u , \chi_i \rangle_{\Gamma_\tx{pr}^{-1}}, \quad i = 1, \ldots, r,
\]
and a projector $P: \cH \rightarrow \tx{span}\{\chi_1, \chi_2, \ldots, \chi_r \}$ specified as
\[
P u = \sum_{i = 1}^{r} \chi_i \, R_i(u).
\]
For a random function $\zeta \in \cH$ drawn from the prior measure, we can then express the RTO mapping \eqref{eq:scale} in the unwhitened coordinates:
\begin{equation} 
\label{eq:oblique} 
\lb\{\begin{array}{r l} (\I-P) (\zeta-m_\tx{pr})  & = (\I-P) (u-m_\tx{pr}) \\ 
P \;(\zeta- m_\tx{pr}) & = X \Big[ (\Lambda^2 + \I)^{-\frac12} \big( R(u-m_\tx{pr}) + \Lambda  \Psi^\T S_\tx{obs}^{-1} (F(u)-y) \big)\Big],
\end{array}\rb. 
\end{equation}
where $X = [\chi_1, \chi_2, \ldots, \chi_r ]$. 
Analogous to the splitting of the RTO solution in the scalable implementation, we define
\begin{equation} \label{eq:inf1} 
u_r = R(u - m_\tx{pr}), \quad \tx{and} \quad u = X u_r + u_\perp + m_\tx{pr},
\end{equation}
and projected random variables
\begin{equation} \label{eq:inf2} 
\zeta_r = R(\zeta - m_\tx{pr}), \quad \tx{and} \quad \zeta_\perp = (\I-P) (\zeta-m_\tx{pr}),
\end{equation} 
where $u_\perp$ and $\zeta_\perp$ are elements of the complement of $\tx{range}(X)$.
Then, we can solve the nonlinear system of equations \eqref{eq:oblique} by first letting $u_\perp = \zeta_\perp$ and then solving the $r$--dimensional system of equations
\begin{equation}\label{eq:inf3} 
\Theta(u_r; u_\perp) = \zeta_r,
\end{equation}
for a given $u_\perp$, where the function 
\begin{equation} \label{eq:inf4} 
\Theta(u_r; u_\perp) = (\Lambda^2 + \I)^{-\frac12}  \Big[u_r + \Lambda \Psi^\T S_\tx{obs}^{-1} (F(X u_r + u_\perp + m_\tx{pr})-y) \Big],
\end{equation}
is an affine transformation of the nonlinear forward model $F$.

\begin{remark}
For problems with finite dimensional data, the RTO mapping necessarily modifies a random function drawn from the prior measure only in the finite dimensional subspace spanned by $\{\chi_1, \chi_2, \ldots, \chi_r\}$, which is a subspace of the Cameron--Martin space. 
This is well aligned with the nature of function space inverse problems, where the update from the prior to the posterior is expected to take place in the Cameron--Martin space. Other accelerations of function space MCMC, e.g., \cite{dili,MCMC:DS_2018}, adopt similar approaches to modify their algorithms in some finite dimensional subspace of the Cameron--Martin space. 
For problems with functional (infinite-dimensional) data, as long as the equivalence between the target measure and the prior measure can be established,
one can truncate the SVD and apply RTO as in the finite data case. Such a truncation can also be the key to managing overall computational complexity. 
\end{remark}

\subsection{Well-definedness of RTO on function space}\label{sec:inf_rnd}
We will first prove that, under certain conditions, the RTO measure $\pi_\tx{RTO}$ is equivalent to the prior $\mu_\tx{pr}$. 
Under these conditions, we can then show that RTO is well defined on $\cH$.

\begin{theorem}\label{theo:rtoeqpr}
Let $\zeta$ be a random variable distributed according to the prior measure $\mu_\tx{pr}$, $u$ be the random variable defined through the mapping in (\ref{eq:oblique}), and $\mu_\tx{RTO}$ be the measure induced by $u$.
Denote the subspace $\tx{span(\chi_1, \chi_2, \ldots, \chi_r)}$ and its complement by $W$ and $W^\perp$, respectively. Without loss of generality, let the prior mean be zero. Suppose that for all $a\in \R^r$ and $b \in W^\perp$, the mapping
\[ 
a \mapsto a + \Lambda \Psi^\T S_\tx{obs}^{-1} F(X a + b) ,
\]
is Lipschitz continuous, injective, and its inverse is Lipschitz continuous.  Then, the RTO measure $\mu_\tx{RTO}$ is equivalent to the prior $\mu_\tx{pr}$.
\end{theorem}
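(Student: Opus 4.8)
The plan is to establish the equivalence $\mu_\tx{RTO} \sim \mu_\tx{pr}$ by exploiting the product structure of the prior over the finite-dimensional subspace $W$ and its complement $W^\perp$. Since the eigenfunctions $\chi_i$ are orthogonal in the Cameron--Martin inner product, the prior measure $\mu_\tx{pr}$ factorizes (in the appropriate sense) as a product of the law of $\zeta_\perp = (\I-P)\zeta$ on $W^\perp$ and the law of $\zeta_r = R\zeta \in \R^r$, which is a nondegenerate $r$-dimensional Gaussian; here I would invoke the standard disintegration of a Gaussian measure along a finite-dimensional Cameron--Martin subspace (as in \cite{StuartActa} and the Feldman--H\'ajek setup). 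The RTO mapping \eqref{eq:oblique} acts as the identity on the $W^\perp$-component ($u_\perp = \zeta_\perp$), so the pushforward $\mu_\tx{RTO}$ shares the same $W^\perp$-marginal as $\mu_\tx{pr}$; all the action is confined to the $\R^r$-fiber via the map $\Theta(\cdot\,; u_\perp)$ of \eqref{eq:inf4}.

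Second, I would work conditionally on a fixed $u_\perp = b \in W^\perp$. On this fiber the relation \eqref{eq:inf3} reads $\Theta(u_r; b) = \zeta_r$, i.e.\ $u_r = \Theta(\cdot\,;b)^{-1}(\zeta_r)$, and I need the conditional law of $u_r$ (pushforward of the $r$-dimensional Gaussian $\zeta_r$ under $\Theta(\cdot\,;b)^{-1}$) to be equivalent to the conditional law of $\zeta_r$ itself (a fixed nondegenerate Gaussian on $\R^r$). The hypothesis gives exactly that the map $a \mapsto a + \Lambda\Psi^\T S_\tx{obs}^{-1} F(Xa+b)$ — hence also $\Theta(\cdot\,;b) = (\Lambda^2+\I)^{-1/2}[\,\cdot\, + \Lambda\Psi^\T S_\tx{obs}^{-1}F(X\,\cdot\,+b)]$, which is just a fixed invertible linear rescaling composed with it — is a bi-Lipschitz bijection of $\R^r$. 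A bi-Lipschitz bijection of $\R^n$ is differentiable a.e.\ (Rademacher), has Jacobian bounded away from $0$ and $\infty$, maps null sets to null sets in both directions, and therefore pushes a nondegenerate Gaussian to a measure mutually absolutely continuous with Lebesgue measure, hence equivalent to any nondegenerate Gaussian on $\R^r$. This establishes fiberwise equivalence with Radon--Nikodym derivative given (a.e.) by the change-of-variables formula involving $|\det \Jac\Theta(\cdot\,;b)^{-1}|$ and the ratio of Gaussian densities.

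Third, I would glue the fiberwise statement back together. Writing $\mu_\tx{pr} = \int \mu_\tx{pr}^b\,(\I-P)_\#\mu_\tx{pr}(db)$ and $\mu_\tx{RTO} = \int (\Theta(\cdot\,;b)^{-1})_\#\mu_\tx{pr}^b\,(\I-P)_\#\mu_\tx{pr}(db)$ over the common $W^\perp$-marginal, fiberwise equivalence for $(\I-P)_\#\mu_\tx{pr}$-a.e.\ $b$ upgrades to equivalence of the joint measures: $\mu_\tx{RTO} \ll \mu_\tx{pr}$ with density $\frac{d\mu_\tx{RTO}}{d\mu_\tx{pr}}(u) = \frac{d(\Theta(\cdot;u_\perp)^{-1})_\#\mu_\tx{pr}^{u_\perp}}{d\mu_\tx{pr}^{u_\perp}}(u_r)$, and symmetrically $\mu_\tx{pr} \ll \mu_\tx{RTO}$ because $\Theta(\cdot\,;b)$ is invertible with Lipschitz inverse on every fiber, so the density is positive a.e. Hence $\mu_\tx{RTO} \sim \mu_\tx{pr}$.

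The main obstacle is the measure-theoretic bookkeeping in steps one and three rather than any single hard estimate: making the disintegration of the Gaussian prior along the finite-dimensional Cameron--Martin subspace $W$ precise (including that the conditional measures $\mu_\tx{pr}^b$ are, up to a $b$-independent translation, a single fixed nondegenerate Gaussian on the $r$-dimensional fiber), and then verifying that fiberwise mutual absolute continuity with a.e.-defined, jointly measurable Radon--Nikodym derivatives assembles into mutual absolute continuity of the joint measures. The genuinely analytic input — that a bi-Lipschitz bijection of $\R^r$ preserves the class of nondegenerate Gaussians — is standard (Rademacher plus the area formula) and, crucially, it is dimension-independent in the sense that $r \le m$ is fixed regardless of the discretization dimension, which is what ultimately yields the dimension-independence claimed for RTO.
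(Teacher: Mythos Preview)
Your proposal is correct and follows essentially the same route as the paper's proof: both decompose the prior along $W$ and $W^\perp$, use the Cameron--Martin orthogonality of the $\chi_i$ to obtain independence of $\zeta_r$ and $\zeta_\perp$ (the paper cites Proposition~1.26 of \cite{InfDimAnalysis} for this), perform a change of variables on the $r$-dimensional fiber via the bi-Lipschitz map $\Theta(\cdot\,;b)$, and conclude by showing the resulting Radon--Nikodym derivative is positive almost everywhere. The only cosmetic difference is that the paper writes out the explicit density $\cR(a;b) = \frac{\pi_{\zeta_r}\circ\Theta(a;b)}{\pi_{\zeta_r}(a)}\lvert\det\Jac\Theta(a;b)\rvert$ and reassembles the joint measure via regular conditional probability, whereas you phrase the same steps in the language of disintegration and pushforwards.
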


\begin{proof}
In this proof only, we will employ the probability triplet $(\Omega,\cF,\uP)$ and describe the random function formally as the map $u : \Omega \to \cH$. 
We assume that the measurable space $\lb(\Omega, \cF\rb)$ is a Radon space. The four random variables in \eqref{eq:inf1}--\eqref{eq:inf3} can be defined as
\begin{align*}
\zeta_r&: \Omega \to \R^r, &	u_r&: \Omega \to \R^r, & \zeta_\perp&: \Omega \to W^\perp, & u_\perp&: \Omega \to W^\perp  .
\end{align*}
We use $\mathcal{B}(\cdot)$ to denote the Borel algebra. Let the notation $\uP^u$ denote the push-forward measure of $\uP$ through the mapping $u$
\[ 
\uP^u \defeq u_\sharp \uP = \uP \lb(u^{-1}(\cdot)\rb) = \uP\lb( u \in\, \cdot\,\rb).
\]
Using this notation, we have $\mu_\tx{pr} = \uP^\zeta$ and $\mu_\tx{RTO} = \uP^u$.

Since, under the mapping \eqref{eq:oblique}, the infinite dimensional random variables $\zeta_\perp$ and $u_\perp$ take the same value, we use the regular conditional probability $\nu:W^\perp \times \cF \to [0,1]$ of the form
\[
\nu : (b,A) \to \nu(b,A) = \uP\lb(u\in A \,\mb\vert\, u_\perp = b\rb), \quad \forall A \in \cF, \; \forall b \in W^\perp,
\]
to analyze the RTO measure. For any $A \in \cB(\cH)$ and $b \in W^\perp$, we define the set
\[
A_r(b) \defeq \{ a \in \R^r \,|\, Xa + b \in A\}.
\]
Then, for any $A \in \cB(\cH)$, the RTO measure can be expressed in a conditional form
\[
\uP^u(A) = \uP\lb( \{u \in A\} \intersect \{u_\perp \in W^\perp\} \rb) = \int_{W^\perp} \uP\lb(u \in A \,|\, u_\perp = b\rb) \uP^{u_\perp} (\dd b).
\]
Given a fixed $b \in W^\perp$, RTO solves the equation $\Theta(u_r; b) = \zeta_r$, and thus we have
\[
\uP\lb( u \in A \,|\, u_\perp = b \rb)  = \uP\lb( u_r \in A_r(b) \,|\, u_\perp = b \rb) = \uP\lb(\zeta_r \in \Theta(A_r(b); b) \,|\, \zeta_\perp = b \rb) = \nu\lb(b, \zeta_r^{-1}\circ \Theta(A_r(b); b)\rb).
\]
This way, the RTO measure can be expressed in terms of the measure $\uP^{u_\perp} = \uP^{\zeta_\perp}$ and the conditional measure $\nu\lb(b, \zeta_r^{-1} (\cdot)\rb)$. This leads to 
\begin{equation}\label{eq:rto_law}
\uP^u(A) = \int_{W^\perp} \nu\lb(b, \zeta_r^{-1}\circ \Theta(A_r(b); b)\rb) \uP^{\zeta_\perp} (\dd b).
\end{equation}

The conditional measure $ \nu\lb(b, \zeta_r^{-1}(\cdot)\rb) = \uP(\zeta_r \in \cdot \, |\, \zeta_\perp = b)$ is the measure of a finite number of directions, $\zeta_r$, of the prior conditioned on a particular value, $\zeta_\perp = b$. 
Since the basis functions $\{\chi_1, \chi_2, \ldots, \chi_r \}$ are orthogonal with respect to the Cameron--Martin inner product $\langle\cdot, \cdot\rangle_{\Gamma_\tx{pr}^{-1}}$, the two random variables $\zeta_r$ and $\zeta_\perp$ are independent (see Proposition 1.26 of \cite{InfDimAnalysis}). Hence, the conditional measure $\nu\lb(b, \zeta_r^{-1}(\cdot)\rb)$ is equivalent to the law of $\zeta_r$. 
Let $\pi_{\zeta_r}$ denote its probability density function. Then, we have
\begin{align}\label{eq:cond_law}
\nu\lb(b, \zeta_r^{-1}\circ \Theta(A_r(b); b)\rb) &= \int_{\Theta(A_r(b);b)} \pi_{\zeta_r} (a) \dd a \nonumber \\
&= \int_{A_r(b)} \pi_{\zeta_r}\circ \Theta(a;b)  \lb| \det \Jac \Theta(a;b) \rb| \dd a \nonumber\\
&= \int_{A_r(b)} \frac{\pi_{\zeta_r}\circ \Theta (a;b)}{\pi_{\zeta_r} (a)} \lb| \det \Jac \Theta(a;b) \rb| \, \pi_{\zeta_r} (a) \;\dd a \nonumber\\
&= \int_{A_r(b)} \frac{\pi_{\zeta_r}\circ \Theta (a;b)}{\pi_{\zeta_r} (a)} \lb| \det \Jac \Theta(a;b) \rb| \, \nu\lb(b, \zeta_r^{-1}(\dd a)\rb).
\end{align}
The change of variables in the above expression uses the fact that $\Theta(\,\cdot\,; b)$ is Lipschitz continuous and injective, and that its inverse is Lipschitz continuous. 
Note that for almost all $b \in W^\perp$ and $a \in \R^r$, the expression
\[
\cR (a;b) \defeq \frac{\pi_{\zeta_r}\circ \Theta (a;b)}{\pi_{\zeta_r} (a)} \lb| \det \Jac \Theta(a;b) \rb| ,
\]
is positive. 
Substituting \eqref{eq:cond_law} into \eqref{eq:rto_law} and using the change of variables $b = (\I-P)z$ and $a = R(z)$ for any $z \in \cH$, we obtain the RTO measure in the form
\begin{equation}
\uP^u(A) = \int_{W^\perp} \int_{A_r(b)} \cR (a;b) \, \nu\lb(b, \zeta_r^{-1}(\dd a)\rb) \uP^{\zeta_\perp} (\dd b)  =  \int_{A}  \cR \Big(R(z); (\I - P)z \Big)  \uP^{\zeta} (\dd z).
\end{equation}
Therefore, the Radon--Nikodym derivative of the RTO measure with respect to the prior measure 
\[\frac{\dd \mu_\tx{RTO}}{\dd \mu_\tx{pr}}(u) = \cR \Big(R(z); (\I - P)z \Big), \]
is positive almost everywhere. This implies that $\mu_\tx{RTO}$ is equivalent to $\mu_\tx{pr}$. 
\end{proof}

Theorem \ref{theo:rtoeqpr} implies that the MH algorithm using RTO as its independence proposal yields dimension-independent performance in the function space setting. We formalize this notion in the following theorem.

\begin{theorem}
Suppose that the target measure $\mu_\tx{tar}$ is equivalent to the prior measure $\mu_\tx{pr}$, i.e., $\mu_\tx{tar} \sim \mu_\tx{pr}$. Under the assumptions of Theorem \ref{theo:rtoeqpr}, the acceptance probability of the MH algorithm using RTO as its independence proposal is positive almost surely with respect to $\mu_\tx{pr} \times \mu_\tx{pr}$. %
\end{theorem}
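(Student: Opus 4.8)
The plan is to combine the equivalence $\mu_\tx{RTO} \sim \mu_\tx{pr}$ from Theorem~\ref{theo:rtoeqpr} with the hypothesis $\mu_\tx{tar} \sim \mu_\tx{pr}$, and then feed this into the general well-definedness criterion for Metropolis--Hastings recalled in Section~\ref{sec:inf_rnd}. The RTO proposal is an \emph{independence} proposal, meaning $q(u, du') = \mu_\tx{RTO}(du')$ does not depend on the current state $u$. So the two measures on $\cH \times \cH$ in \eqref{eq:nus} become the product measures $\nu(du, du') = \mu_\tx{RTO}(du')\,\mu_\tx{tar}(du)$ and $\nu^\bot(du, du') = \mu_\tx{RTO}(du)\,\mu_\tx{tar}(du')$. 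First I would check that $\nu^\bot \ll \nu$ (indeed $\nu^\bot \sim \nu$): since $\mu_\tx{tar} \sim \mu_\tx{pr}$ and $\mu_\tx{RTO} \sim \mu_\tx{pr}$, transitivity of equivalence gives $\mu_\tx{tar} \sim \mu_\tx{RTO}$, and the product of equivalent measures is equivalent, so $\nu^\bot \sim \nu$ on $\cB(\cH) \otimes \cB(\cH)$.

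Next I would write the Radon--Nikodym derivative explicitly. By independence of the factors,
\[
\frac{\dd\nu^\bot}{\dd\nu}(u,u') = \frac{\dd\mu_\tx{tar}}{\dd\mu_\tx{RTO}}(u')\cdot\frac{\dd\mu_\tx{RTO}}{\dd\mu_\tx{tar}}(u) = \frac{w(u')}{w(u)},
\]
where, chaining through the prior, $\dd\mu_\tx{tar}/\dd\mu_\tx{RTO} = (\dd\mu_\tx{tar}/\dd\mu_\tx{pr})\,(\dd\mu_\tx{pr}/\dd\mu_\tx{RTO})$ and the latter is the reciprocal of the positive density $\cR(R(z);(\I-P)z)$ established in the proof of Theorem~\ref{theo:rtoeqpr}. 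The point is that both $\dd\mu_\tx{tar}/\dd\mu_\tx{pr}$ and $\dd\mu_\tx{RTO}/\dd\mu_\tx{pr}$ are strictly positive $\mu_\tx{pr}$-a.e.\ (the former because $\mu_\tx{tar} \sim \mu_\tx{pr}$ by hypothesis, the latter by Theorem~\ref{theo:rtoeqpr}), hence their ratio is finite and strictly positive $\mu_\tx{pr}$-a.e. Therefore $\dd\nu^\bot/\dd\nu$ is finite and strictly positive on a set of full $\mu_\tx{pr}\times\mu_\tx{pr}$ measure.

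Finally, I would conclude by plugging into \eqref{acceptance}: the acceptance probability is $\alpha(u,u') = \min\{1, (\dd\nu^\bot/\dd\nu)(u,u')\}$, which is strictly positive exactly where $\dd\nu^\bot/\dd\nu > 0$. Since $\mu_\tx{RTO}$ and $\mu_\tx{tar}$ are both equivalent to $\mu_\tx{pr}$, the measure $\mu_\tx{pr}\times\mu_\tx{pr}$ is mutually absolutely continuous with $\nu$ (and with $\nu^\bot$), so "full $\nu$-measure" and "full $\mu_\tx{pr}\times\mu_\tx{pr}$-measure" coincide; hence $\alpha(u,u') > 0$ for $\mu_\tx{pr}\times\mu_\tx{pr}$-almost every $(u,u')$, which is the claim. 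I do not anticipate a genuine obstacle here: the theorem is essentially a bookkeeping corollary of Theorem~\ref{theo:rtoeqpr} together with Tierney's criterion for well-defined MH (as cited via \cite{MCMC:Tierney_1998,StuartActa}). The only mild care needed is to state the RTO weight $w$ in the function-space notation consistently with \eqref{eq:w_svd}/\eqref{eq:w(theta)} and to note that the independence structure of the proposal is what collapses $\nu,\nu^\bot$ to products — but neither step involves real difficulty.
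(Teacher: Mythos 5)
Your proposal is correct and follows essentially the same route as the paper: combine Theorem \ref{theo:rtoeqpr} with $\mu_\tx{tar}\sim\mu_\tx{pr}$ to get that $\omega = \dd\mu_\tx{tar}/\dd\mu_\tx{RTO}$ is positive $\mu_\tx{pr}$-a.s., use the independence structure to reduce $\dd\nu^\bot/\dd\nu$ to the ratio $\omega(u')/\omega(u)$ (the paper cites this as a special case of Theorem 5.1 of \cite{StuartActa}), and conclude positivity of the acceptance probability $\mu_\tx{pr}\times\mu_\tx{pr}$-a.s.
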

\begin{proof}
Following the result of Theorem \ref{theo:rtoeqpr} and the condition $\mu_\tx{pr} \sim \mu_\tx{tar}$, the Radon--Nikodym derivative of the target measure with respect to the RTO measure 
\[
\omega(u) = \frac{\dd \mu_\tx{tar}}{\dd \mu_\tx{RTO}}(u), 
\]
is $\mu_\tx{pr}$--almost surely positive.
The rest of the proof is a special case of Theorem 5.1 in \cite{StuartActa}. 
Since RTO is an independent proposal, the resulting MH proposal measure becomes $q(u,\dd u') = \mu_\tx{RTO}(\dd u')$, and the pair of transition measures of MH become
\begin{align*}
	\nu(\dd  u,\dd u') & = \mu_\tx{RTO}(\dd u')\, \mu_\tx{tar}(\dd  u)\\
	\nu^\perp(\dd  u,\dd u') & = \mu_\tx{RTO}(\dd u) \, \mu_\tx{tar}(\dd  u'),
\end{align*}
This way, the acceptance probability can be expressed as
\[
\alpha(u, u')  = \min \left( 1, \frac{\dd \mu_\tx{tar}}{\dd \mu_\tx{RTO}}(u') \Big/ \frac{\dd \mu_\tx{tar}}{\dd \mu_\tx{RTO}}(u) \right) = \min \left( 1, \frac{\omega(u')}{\omega(u)} \right).
\]
Because $\omega(u)$ is positive $\mu_\tx{pr}$--almost surely, the acceptance probability $\alpha(u, u')$ is positive $\mu_\tx{pr}\times \mu_\tx{pr}$--almost surely.
\end{proof}

RTO--MH is therefore well-defined in a function-space setting, under the conditions in Theorem \ref{theo:rtoeqpr}. Thus refining the parameter discretization in a discrete setting should not diminish RTO--MH's sampling efficiency. %
Note that the Radon--Nikodym derivative $\omega(u)$ is also the importance ratio used in self-normalized importance sampling. Ensuring that $\omega(u)$ is positive (almost surely) can make the effective sample size of the self-normalized importance sampling estimator invariant to the discretized parameter dimension; see \cite{agapiou} and references therein for formal justifications. 
\section{Example 1: 1D elliptic PDE} \label{sec:numerics}
The previous section provided a theoretical argument for RTO's dimension independence. This section numerically explores the factors that influence its sampling performance, using a simple one-dimensional elliptic PDE inverse problem. 
We describe the setup of the test case (Section~\ref{sec:probsetup}) and then explore the effects of parameter dimension (Section~\ref{sec:rto_dim}) and observational noise (Section~\ref{sec:rto_sig}). We conclude by comparing the performances of RTO and pCN (Section~\ref{sec:rto_pcn}).

\subsection{Problem setup}\label{sec:probsetup}
The diffusion equation is used to model the spatial distribution of many physical quantities, such as temperature, electrostatic potential, or pressure in porous media. We consider the following stationary diffusion equation, 
\begin{align*}
-\frac{\dd }{\dd x} \lb(\kappa(x) \frac{\dd p}{\dd x}(x) \rb)= f(x),\quad  0< x< 1, 
\end{align*}
with boundary conditions
\begin{align*}
\kappa(0) \frac{\dd p}{\dd x}(0)  = -1,\quad\quad p(1) = 1,
\end{align*}
and source term $f$. The diffusion coefficient $\kappa$ is endowed with a log-normal prior distribution. In particular, $\log \kappa$ is a Gaussian process with a Laplace-like differential operator as its precision operator. After discretization on a uniform grid with $n$ nodes, $\kappa$ is thus specified as
\begin{align*} \kappa &= 1.5 \exp\lb( S^{}_\tx{pr} v\rb) + 0.1 & S^{-1}_\tx{pr} &= \sqrt{n}\begin{bmatrix}
\sqrt{n} &  &  &  & \sqrt{n} \\ 
-1 & 1  &  &  & \\ 
 & -1 & 1  &  & \\ 
 &  &  &\ddots  & \\ 
 &  &  & -1 & 1 
\end{bmatrix} \end{align*}
where $v \in \mathbb{R}^n$ is a vector of independent standard normals and we have abused notation so that $\kappa \in \mathbb{R}^n$ immediately above as well.
For any realization of $\kappa$, the equation is solved numerically using finite differences with the three point central difference stencil. Derivatives of the potential field $p$ with respect to $\kappa$ are evaluated using the matrix-free adjoint model. %
In this setting, the dimension of discretized parameters is the same as the degrees of freedom in the forward model. Computing $ S^{}_\tx{pr} v$, solving the forward model, and solving the adjoint model (for one matrix-vector product with the Jacobian) all require $\mathcal{O}(n)$ floating point operations.

For the inverse problem, we suppose that the potential field is observed, with additive Gaussian noise, at nine equally-spaced points along the domain. Our goal is to condition the field $\kappa$ on these observations. We generate synthetic data using a mesh size of $151$, which does not correspond to any mesh size used in solving the inverse problem, avoiding an inverse crime. The ``true'' diffusion coefficient, source term, potential field, and data are depicted in Figure~\ref{fig:forcing}. 

\begin{figure}[ht]
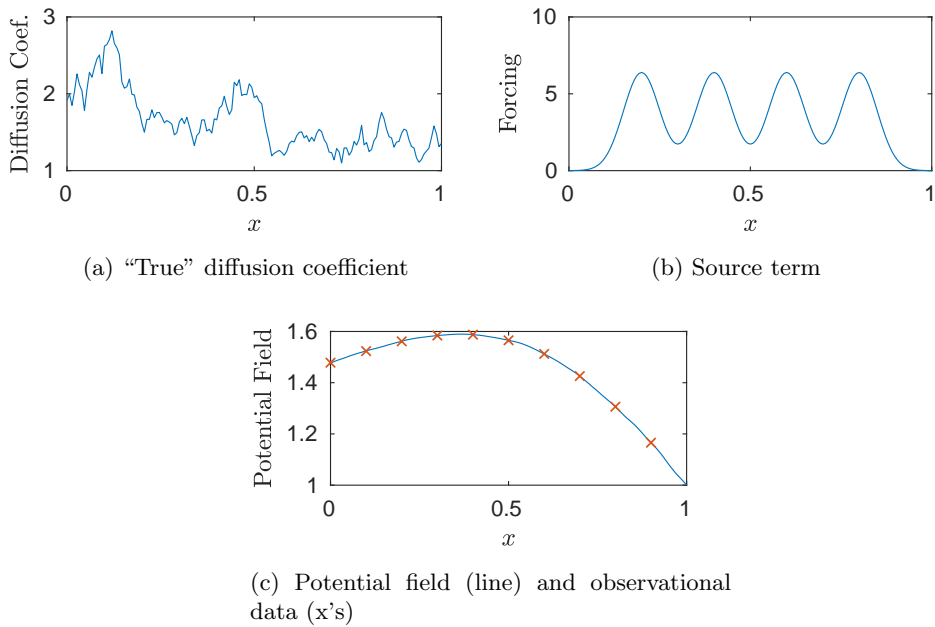

\centering
\begin{subfigure}{0.4\textwidth}
\centering
\pic{1}{figs/ktrue}
\caption{``True'' diffusion coefficient}
\end{subfigure}
\begin{subfigure}{0.4\textwidth}
\centering
\pic{1}{figs/forcing}
\caption{Source term}
\end{subfigure}
\begin{subfigure}{0.4\textwidth}
\centering
\pic{1}{figs/data}
\caption{Potential field (line) and observational data (x's)}
\end{subfigure}
\caption{Elliptic PDE problem setup.}\label{fig:forcing}
\end{figure}

\subsection{Influence of parameter dimension} \label{sec:rto_dim}
In our first experiment, we solve the Bayesian inverse problem using RTO for a series of parameter dimensions ranging from $n=41$ to $n=10241$. We fix the observational noise standard deviation %
to $10^{-5}$ and, at each parameter dimension, run an MCMC chain of $5000$ steps. The chains are started at the posterior mode.
As shown in Figure~\ref{fig:rto_post_dim}, the posterior distributions obtained for the different discretizations match quite closely. 
As shown in Table~\ref{tab:rto_dim}, the acceptance rate and effective sample size (ESS) are both high and essentially constant with respect to parameter dimension. (We report the median ESS over all components of the $n$-dimensional chain.)
These results provide an empirical demonstration of RTO's dimension independence, meaning that the number of MCMC steps required to obtain a single effectively independent sample is independent of $n$. 

The number of optimization iterations in each MCMC step is also roughly constant in $n$. 
To solve each optimization problem, we use the nonlinear least-squares solver in MATLAB, provided with Jacobian-vector products. The solver uses a trust-region-reflective algorithm where each iteration approximately solves a large linear system using preconditioned conjugate gradients. We set the starting point for each sequence of optimization iterations to the posterior mode. The primary stopping criterion is a function tolerance (i.e., a lower bound on the change in the value of the objective) of $10^{-6}$, which is below the level of discretization error. %

\begin{figure}[ht]
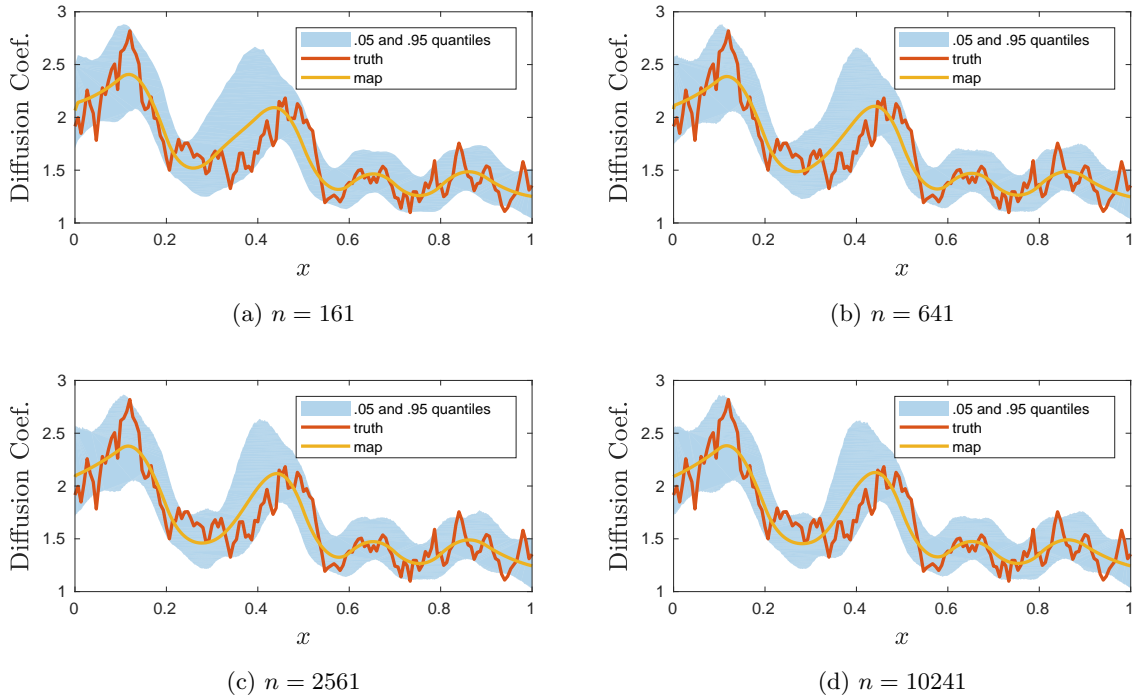

\centering
\begin{subfigure}{0.49\textwidth}
\centering
\pic{1}{figs/n161}
\caption{$n = 161$}
\end{subfigure}
\begin{subfigure}{0.49\textwidth}
\centering
\pic{1}{figs/n641}
\caption{$n = 641$}
\end{subfigure}
\begin{subfigure}{0.49\textwidth}
\centering
\pic{1}{figs/n2561}
\caption{$n = 2561$}
\end{subfigure}
\begin{subfigure}{0.49\textwidth}
\centering
\pic{1}{figs/n10241}
\caption{$n = 10241$}
\end{subfigure}
\caption{Summary statistics of posterior distributions computed via RTO-MH with varying parameter dimension $n$. 90\% marginal credibility intervals (blue shaded region), true diffusivity coefficient (red line), and MAP estimate (yellow line).}\label{fig:rto_post_dim}
\end{figure}

\begin{table}[htbp]
{\centering
\caption{Effective sample size (ESS), average acceptance rate, and average number of optimization iterations per step of RTO, with varying parameter dimension. MCMC chain length is 5000 steps.}\label{tab:rto_dim}
\small
\begin{tabular}{l r r r r r r r r r}
\toprule
Parameter Dim. & $41$ & $81$ & $161$ & $321$ & $641$ & $1281$ & $2561$ & $5121$ & $10241$ \\
\midrule
ESS & 4268.9 & 4206.7 & 4307.1 & 4343.5 & 4544.8 & 4464.5 & 4523.3 & 4484.9 & 4532.2 \\
Acceptance Rate & 0.928 & 0.926 & 0.932 & 0.936 & 0.948 & 0.950 & 0.954 & 0.950 & 0.953 \\
Opt.\ Iterations & 170.74 & 209.12 & 273.03 & 324.04 & 357.76 & 307.50 & 198.81 & 165.06 & 142.25 \\
\bottomrule
\end{tabular}\\
}%
\end{table}

Figure~\ref{fig:cpu_vs_dim} shows the CPU times needed to generate one effectively independent sample, to take generate one RTO sample, and to evaluate the forward model once. All three lines suggest that the CPU times increases linearly with the discretized parameter dimension. This confirms our analysis of the computational complexity of RTO in Proposition \ref{prop:complexity}---in this example, the computational complexities of both the forward model and the RTO map are linear. It also implies that it takes the same number of MCMC steps to obtain a desired accuracy regardless of the discretized parameter dimension. This confirms our finding in Section \ref{sec:proof}. 
We also report the CPU time for the standard RTO (with a dense matrix $Q \in \R^{(m+n)\times n}$) to generate one sample (red line in Figure~\ref{fig:cpu_vs_dim}). In this example, we observe that the computational complexity of the standard RTO is quadratic with the parameter dimension. %

\begin{figure}[ht]
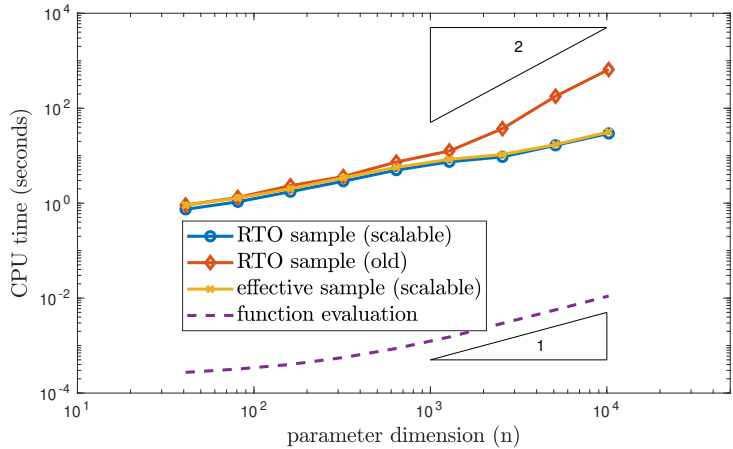

\centering
\pic{0.7}{figs/cpu_vs_dim}
\caption{Computational cost for elements of RTO, varying parameter dimension.}\label{fig:cpu_vs_dim}
\end{figure}

\subsection{Influence of observational noise} \label{sec:rto_sig}
In our second experiment, we examine the effect of observational noise magnitude on the sampling efficiency of RTO. We fix the parameter dimension to $n=641$ and scan through observational noise standard deviations ranging from $10^{-7}$ to $10^0$, which correspond to signal-to-noise ratios ranging from $1.5\times 10^7$ to $1.5$. Once again we run MCMC chains of length $5000$. Changing the observational noise magnitude changes the posterior distribution, as shown in Figure~\ref{fig:rto_post_obs}. With extremely small observational noise, the probability mass of the posterior concentrates on the manifold where the parameter values yield outputs that exactly match the data. Generally, this collapse makes the posterior more difficult to simulate using most MCMC methods.
In the case of RTO, it makes the optimization problems harder to solve. As shown in Table~\ref{tab:rto_obs}, even though the ESS and acceptance rate remain relatively constant with varying observational noise, the number of optimization iterations required to obtain each sample increases as the observational noise becomes very small. Thus, as the observational noise shrinks, more function evaluations are required for each MCMC step. This behavior is also illustrated in Figure~\ref{fig:cpu_vs_obs}, where the CPU time for a single function evaluation is constant, but the time for one MCMC step and for one independent sample increases. 

Of course, the number of optimization iterations at each step depends on the choice of stopping tolerance. In these experiments, we fix the function tolerance (see \S\ref{sec:rto_dim}) to $10^{-6}$. However, the forward model and the observed data enter the RTO objective function through the whitening transform (cf. Section \ref{sec:target}). This implicitly normalizes the observational noise by the standard deviation. This way, a fixed tolerance implicitly imposes an increasingly stringent condition for smaller observational noise, which explains the higher number of function evaluations required. 
Overall, though, these results suggest that RTO can be applied to inverse problems with extremely small observational noise provided that solving the optimization problems remains tractable.

\begin{figure}[ht]
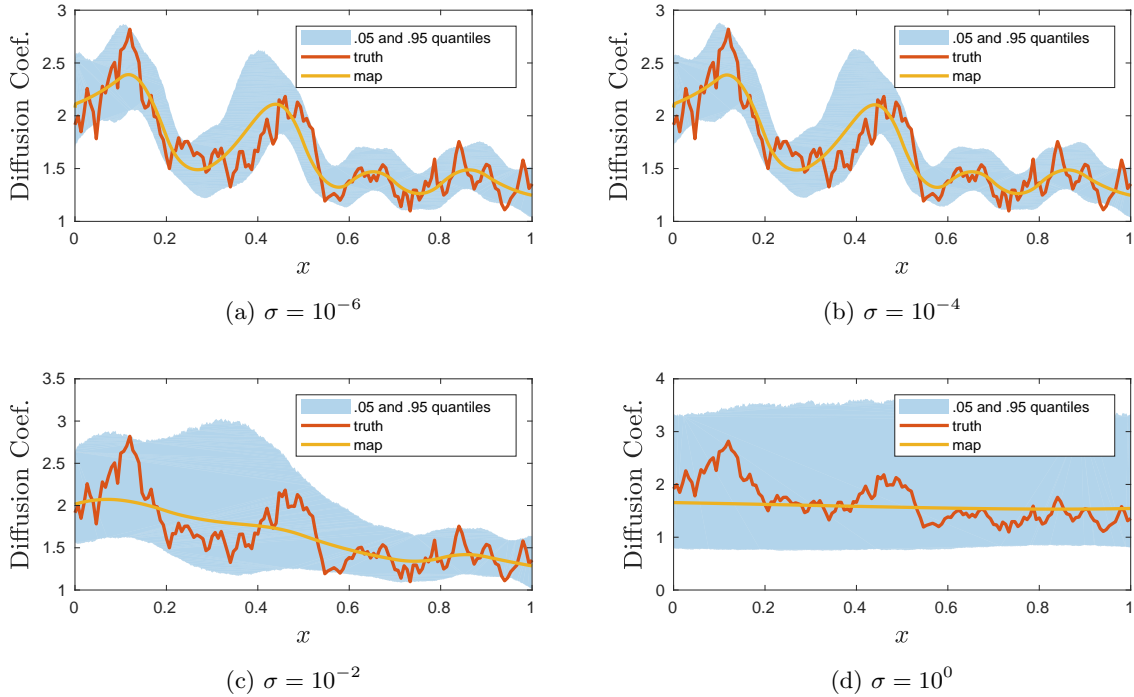

\centering
\begin{subfigure}{0.49\textwidth}
\centering
\pic{1}{figs/sig-6}
\caption{$\sigma = 10^{-6}$}
\end{subfigure}
\begin{subfigure}{0.49\textwidth}
\centering
\pic{1}{figs/sig-4}
\caption{$\sigma = 10^{-4}$}
\end{subfigure}
\begin{subfigure}{0.49\textwidth}
\centering
\pic{1}{figs/sig-2}
\caption{$\sigma = 10^{-2}$}
\end{subfigure}
\begin{subfigure}{0.49\textwidth}
\centering
\pic{1}{figs/sig0}
\caption{$\sigma = 10^{0}$}
\end{subfigure}
\caption{Summary statistics of posterior distributions computed via RTO-MH with varying observational noise $\sigma$. 90\% marginal credibility intervals (blue shaded region), true diffusivity coefficient (red line), and MAP estimate (yellow line).}\label{fig:rto_post_obs}
\end{figure}

\begin{table}[htbp]
{\centering
\caption{Effective sample size (ESS), average acceptance rate, and average number of optimization iterations per step for RTO, for varying observational noise magnitude. Chain length of $5000$.}\label{tab:rto_obs}
\small
\begin{tabular}{l r r r r r r r r r}
\toprule
Noise std deviation & $10^{-7}$ & $10^{-6}$ & $10^{-5}$ & $10^{-4}$ & $10^{-3}$ & $10^{-2}$ & $10^{-1}$ & $10^{0}$ & $10^{1}$ \\
\midrule
Numerical ESS & 4504.8 & 4427.4 & 4349.9 & 4423.0 & 4415.1 & 4187.2 & 4317.7 & 4476.9 & 5000.0 \\
Acceptance Rate & 0.946 & 0.944 & 0.941 & 0.945 & 0.935 & 0.924 & 0.939 & 0.959 & 0.999 \\
Opt.\ Iterations & 567.64 & 495.41 & 363.71 & 296.55 & 89.07 & 8.32 & 5.70 & 4.70 & 3.31 \\
\bottomrule
\end{tabular}\\
}%
\end{table}

\begin{figure}[ht]
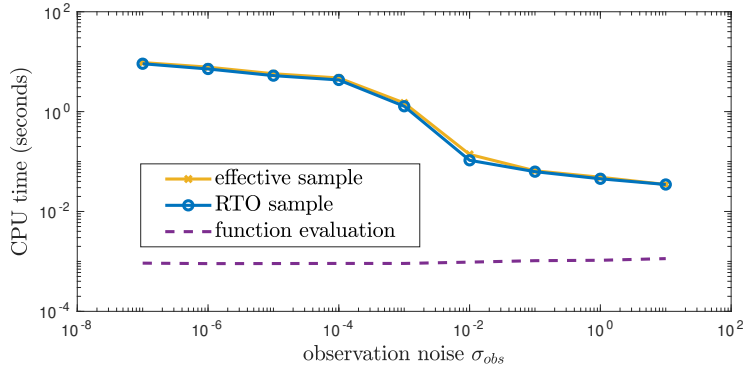

\centering
\pic{0.7}{figs/cpu_vs_obs}
\caption{Computational cost for elements of RTO, varying observational noise.}\label{fig:cpu_vs_obs}
\end{figure}

\subsection{Comparing RTO with pCN} \label{sec:rto_pcn}
In our third experiment, we compare the computational efficiency of RTO and pCN \cite{pCN}. The two algorithms are both dimension-independent. We fix the parameter dimension to $n=641$ and compare the algorithms' performance on inverse problems with different observational noise standard deviations, ranging from $10^{-6}$ to $10^{0}$. %
For pCN, we use a chain length of $5 \times 10^6$ and remove the first 50\% of the samples as burn-in. We manually tune the step size of pCN to obtain the largest empirical ESS. As shown in Figure~\ref{fig:pcn_post_obs}, the posterior marginals from pCN match those obtained with RTO for the two larger observational noise values. For the two smaller observational noise values, however, pCN does not converge. In particular, examination of  Figure~\ref{fig:pcn_post_obs} and of MCMC trace plots for the smaller noise cases shows that the pCN chain does not travel far from its starting point. Table~\ref{tab:compare} reveals that RTO requires less computational time per independent sample in \emph{all} cases, even when the observational noise is larger. (Note that this performance metric, time per ESS, normalizes away the impact of different chain lengths.)

In this numerical example, RTO thus outperforms pCN by a large margin. Moreover, in the two cases with smaller observational noise, RTO is the only algorithm that produces meaningful estimates of the posterior. In summary, we find that RTO's sampling performance is robust to parameter dimension and observational noise, and can be more efficient than pCN. 
\begin{figure}[ht]
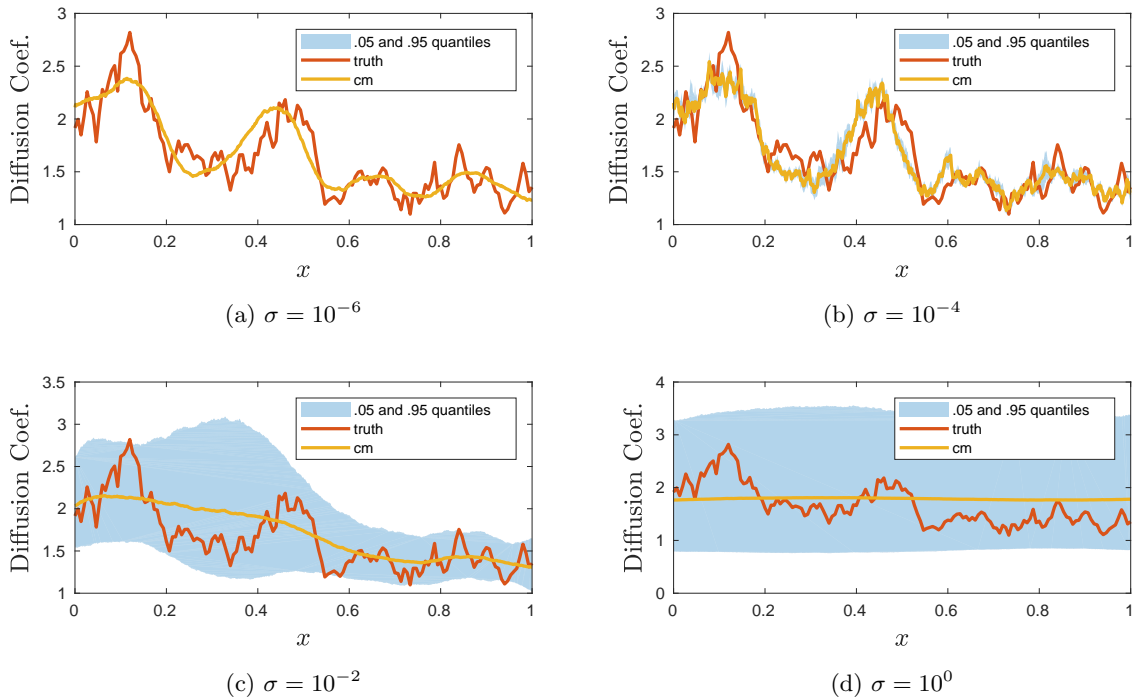

\centering
\begin{subfigure}{0.49\textwidth}
\centering
\pic{1}{figs/pcn_sig-6}
\caption{$\sigma = 10^{-6}$}
\end{subfigure}
\begin{subfigure}{0.49\textwidth}
\centering
\pic{1}{figs/pcn_sig-4}
\caption{$\sigma = 10^{-4}$}
\end{subfigure}
\begin{subfigure}{0.49\textwidth}
\centering
\pic{1}{figs/pcn_sig-2}
\caption{$\sigma = 10^{-2}$}
\end{subfigure}
\begin{subfigure}{0.49\textwidth}
\centering
\pic{1}{figs/pcn_sig0}
\caption{$\sigma = 10^{0}$}
\end{subfigure}
\caption{Summary statistics of posterior distributions computed through pCN, varying observational noise $\sigma$. 90\% credibility intervals (blue shaded region), true diffusivity coefficient (red line) and CM estimate (yellow line). The MCMC chain does not converge for $\sigma = 10^{-6}$ and $\sigma = 10^{-4}$.}\label{fig:pcn_post_obs}
\end{figure}

\begin{table}[htbp]
{\centering
\caption{Comparing computational cost for RTO and pCN.}\label{tab:compare}
\small
\begin{tabular}{r r r}
\toprule
& \multicolumn{2}{c}{\textbf{CPU time (seconds) per ESS}}\\
Observational Noise $\sigma$ & \hspace{3 em} RTO & pCN \\
\midrule
$10^{-6}$ & $7.772$ & $1.193 \cdot 10^{3 *}$\\
$10^{-4}$ & $4.712$ & $1.103 \cdot 10^{3 *}$\\
$10^{-2}$ & $0.139$ & $7.739$\\
$10^0$ & $0.049$ & $0.250$\\
\bottomrule
\end{tabular}\\[3pt]
 \hspace{4 em}$^*$\small Estimated from a non-converged MCMC chain. Actual values may be higher.
}%
\end{table}

\section{Example 2: 2D parabolic PDE} \label{sec:numerics_heat}

To further demonstrate the efficacy of RTO, we solve the inverse problem of identifying the coefficient of a two-dimensional parabolic PDE from point observations of its solution.
Consider the problem domain $\Omega = [0, 3]\times [0, 1]$, with boundary $\partial \Omega$. We denote the spatial coordinate by $x = (x_1, x_2) \in \Omega$.
We model the time-varying potential (solution) field $p(x,t)$ for a given conductivity (coefficient) field $\kappa(x)$ and forcing function $f(x,t)$ using the heat equation
\begin{equation}\label{eq:heat}
\frac{\dd p(x,t)}{\dd t} = \nabla \cdot \left( \kappa(x) \nabla p(x,t) \right) + f(x,t), \quad x \in \Omega, \; t \in [0, T],
\end{equation}
where $T = 2$. Parabolic PDEs of this type are widely used in modeling groundwater flow, optical diffusion tomography, the diffusion of thermal energy, and numerous other common scenarios for inverse problems.
Let $\partial \Omega_\tx{n} = \{ x \in \partial \Omega \,|\, x_2 = 0\}  \cup  \{ x \in \partial \Omega \,|\, x_2 = 1\}$ denote the top and bottom boundaries, and $\partial \Omega_\tx{d} = \{ x \in \partial \Omega \,|\, x_1 = 0\} \cup \{ x \in \partial \Omega \,|\, x_1 = 3\}$ denote the left and right boundaries. 
For $t \geq 0$, we impose the mixed boundary condition:
\[
p(x,t) = 0, \forall x \in \partial \Omega_\tx{d}, \quad \tx{and} \quad (\kappa(x) \nabla p(x,t) ) \cdot \vec{n}(x) = 0, \forall x \in \partial \Omega_\tx{n},
\] 
where $\vec{n}(x)$ is the outward normal vector on the boundary. We also impose a  zero initial condition, i.e., $p(x,0) = 0, \forall x \in \Omega$, and let the potential field be driven by a time-invariant forcing function
\[
f(x, t) = c\,\Big( \exp\big(-\frac{1}{2 r^2} \| x - a\|^2 \big) - \exp\big(-\frac{1}{2 r^2} \| x - b\|^2 \big) \Big), \forall t \geq 0,
\]
with $r = 0.05$, which is the superposition of two Gaussian-shaped sink/source terms centered at $a = (0.5, 0.5)$ and $b = (2.5, 0.5)$, scaled by a constant $c = 6\times 10^{-4}$.

The conductivity field $\kappa(x)$ is endowed with a log-normal prior. That is, letting $u(x) = \log \kappa(x)$, the prior for $u(x)$ takes the form $\N(m_\tx{pr}, \Gamma_\tx{pr})$. Here we prescribe zero prior mean, $m_\tx{pr} = 0$, and model the inverse of the prior covariance operator using the stochastic PDE approach (see \cite{lindgren,StuartActa} and references therein):
\begin{equation}\label{eq:heat_prior}
 - \triangle u(x) + \gamma u(x) = \mathcal{W}(x), \quad x\in\Omega,
\end{equation}
where $\triangle $ is the Laplace operator and $\mathcal{W}(x)$ is the white noise process. We impose a no-flux boundary condition on the above SPDE and set $\gamma = 5$.

Equations \eqref{eq:heat} and \eqref{eq:heat_prior} are solved using the finite element method with bilinear basis functions. A mesh with $120 \times 40$ elements is used in this example. This leads to $n = 4800$ dimensional discretised parameters.
The ``true'' conductivity field used for generating observed data is a realization from the prior distribution. The true conductivity field and the simulated potential field at different times are shown in Figure \ref{fig:setup_h}(a)--(c).
The potential field is observed at $13$ discrete locations (shown as dots in Figure \ref{fig:setup_h}(a)) at $20$ discrete time points equally spaced between $t=0.1$ and $t=2$. 
We set the standard derivation of the observation noise to $\sigma = 3\times 10^{-7}$, which corresponds to a signal-to-noise ratio of about $10$.
In the inverse problem, we use this $m = 260$ dimensional vector of data to estimate the conductivity field $\kappa(x)$. 

\begin{figure}[h]
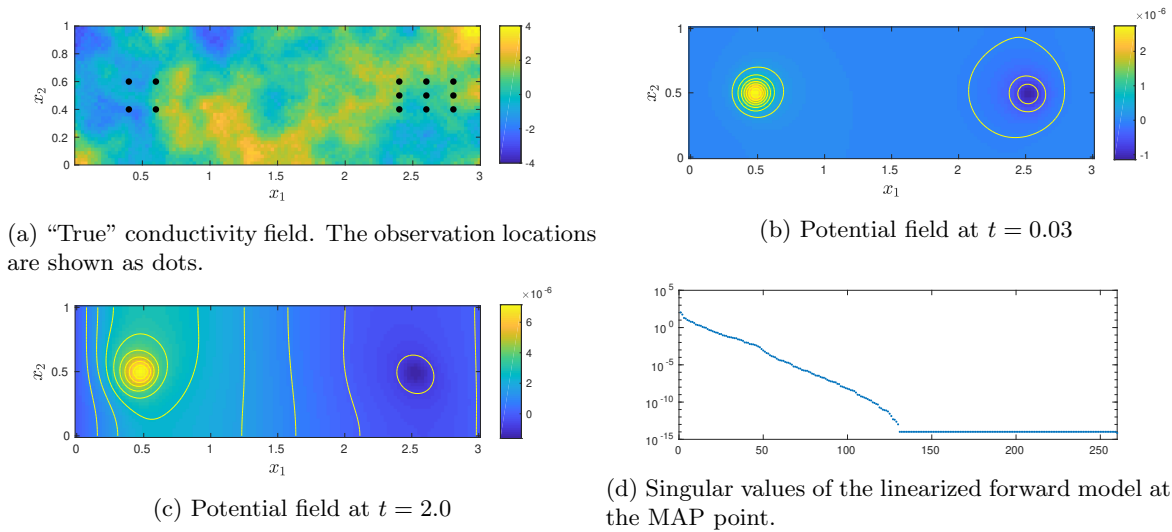

\begin{subfigure}{0.49\textwidth}
\centering
\pic{1}{pics/true_k}
\caption{``True'' conductivity field. The observation locations are shown as dots.\vspace{-1.3em}}
\end{subfigure}
\begin{subfigure}{0.49\textwidth}
\centering
\pic{1}{pics/heat_03}
\caption{Potential field at $t = 0.03$}
\end{subfigure}
\begin{subfigure}{0.49\textwidth}
\centering
\pic{1}{pics/heat_2}
\caption{Potential field at $t = 2.0$}
\end{subfigure}
\begin{subfigure}{0.47\textwidth}
\centering
\pic{1}{pics/heat_svd}
\caption{Singular values of the linearized forward model at the MAP point.}
\end{subfigure}
\caption{Setup of the parabolic inversion example.}
\label{fig:setup_h}
\end{figure}

The forward model is linearized at the MAP point. As shown in Figure \ref{fig:setup_h}(d), we observe a sharp decay in the singular values of the linearized model, with these values dropping below machine precision after rank $130$. We truncate the singular values at thresholds $\tau = 1$, $10^{-2}$, and $10^{-4}$ to define three different RTO proposals. 
Then, using each RTO proposal, we generate $2500$ samples to characterize the posterior using a Metropolis independence sampler (i.e., RTO-MH). 
The rank of the truncated SVD, statistics about the computation of each RTO sample, and effective sample size are reported in Table \ref{tab:rto_rank} for each truncation threshold. Note that all the truncated ranks are significantly smaller than the parameter dimension $n = 4800$. 

Here, we observe that with a rather large truncation threshold ($\tau = 1$), we obtain a significantly lower ESS than with the other two truncation thresholds. This behavior agrees with the heuristics discussed in Section \ref{sec:rto_complexity}: if one truncates the SVD more aggressively, the RTO proposal gets closer to the prior. Thus, it is expected that this RTO proposal will have lower statistical performance than an RTO proposal obtained with smaller $\tau$ (e.g., $10^{-2}$). 
Once the truncation threshold is sufficiently small, however, we do not gain additional statistical performance by allowing more modes; compare the ESS at $\tau = 10^{-2}$ to that at $\tau = 10^{-4}$. This behavior is also in accordance with the truncation strategies and interpretation of the singular values discussed in Section \ref{sec:rto_complexity}.
Regarding the computational performance, we observe that more optimization iterations and longer CPU times are needed to obtain one RTO sample (on average) with the truncation threshold $\tau = 1$ than with the smaller truncation thresholds. We attribute this behavior to fact that the truncated proposal does not constrain the parameter value in directions complementary to the range of $\Phi$, and thus the optimization iterations may need to navigate through the tails of the posterior. %
For truncation thresholds $10^{-2}$ and $10^{-4}$, the difference in the number of optimization iterations is insignificant.
Overall, the truncation threshold of $\tau \approx 10^{-2}$ suggested in Section \ref{sec:rto_complexity} appears to be a reasonable choice in this example.

\begin{table}[ht]
{\centering
\caption{Rank of the truncated SVD, average number of forward model evaluations, average number of MVPs with the linearized forward model and its adjoint, average number of optimization iterations per RTO sample, average CPU time per RTO sample, and ESS; all for varying SVD truncation thresholds. Chain length of $2500$.}\label{tab:rto_rank}
\small
\begin{tabular}{l r r r }
\toprule
Truncation threshold & $1$ & $10^{-2}$ & $10^{-4}$ \\
\midrule
Rank & 15 & 39 & 57 \\
Number of evaluations of $G(v)$ & 18.8 & 12 & 12.4 \\
Number of MVPs with $\nabla G(v)$ & 321.6 & 235.6 & 258.4 \\
Optimization iterations per sample & 17.8 & 11 & 11.4 \\
CPU time (sec) per sample & 354 & 257 & 283 \\
Numerical ESS (out of 2500) & 292 & 1140 & 1130 \\
\bottomrule
\end{tabular}\\
}%
\end{table}

Two posterior samples and some summary statistics of the posterior, computed using RTO-MH with the truncation threshold $\tau = 10^{-2}$, are shown in Figure~\ref{fig:result_h}. We observe that the posterior samples and the posterior mean demonstrate similar structure to the ``true'' conductivity field used to generate the synthetic data set. We also observe that the posterior standard deviation of the conductivity field is low in regions near the observation locations. In comparison, the posterior standard deviation is relatively high in regions near the boundary and between clusters of observation locations, where  the observed data do not provide sufficient information to infer parameters.

We also attempted to compare RTO with pCN in this example. However, because of the rather informative data, pCN fails to produce an ergodic chain in a comparable amount of CPU time.
An additional, but important, implementation note is that we generated RTO samples and evaluated the corresponding weighting functions \eqref{eq:w_svd} in parallel, and then quickly postprocessed the RTO samples using the Metropolis procedure to obtain posterior samples. Postprocessing is the only serial step of the calculation, and is very fast since all the costly calculations (sample generation, weight evaluation) are already completed.
In this way, RTO can significantly reduce the wall clock time of Markov chain simulation compared to common MCMC methods that use state-dependent transition kernels, since posterior density evaluations and Markov chain simulation must be carried out sequentially in the latter case.

\begin{figure}[ht]
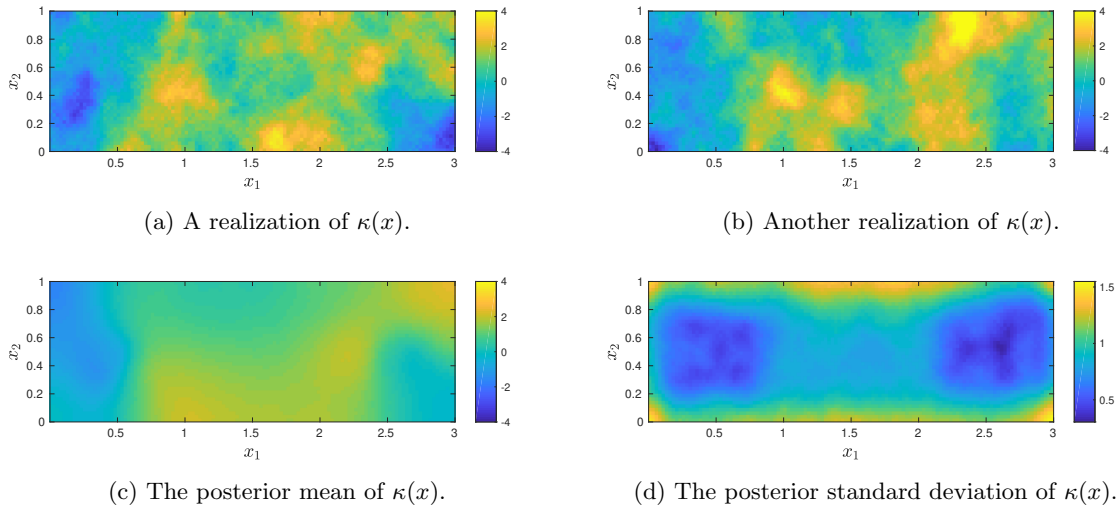

\begin{subfigure}{0.49\textwidth}
\centering
\pic{1}{pics/heat_u1}
\caption{A realization of $\kappa(x)$.}
\end{subfigure}
\begin{subfigure}{0.49\textwidth}
\centering\hspace{-1.5em}
\pic{1}{pics/heat_u2}
\caption{Another realization of $\kappa(x)$.}
\end{subfigure}
\begin{subfigure}{0.49\textwidth}
\centering
\pic{1}{pics/heat_mk}
\caption{The posterior mean of $\kappa(x)$.}
\end{subfigure}
\begin{subfigure}{0.49\textwidth}
\centering
\pic{1}{pics/heat_sk}
\caption{The posterior standard deviation of $\kappa(x)$.}
\end{subfigure}
\caption{Sample realizations and summary statistics of the conductivity field $\kappa(x)$ distributed according to the posterior. }
\label{fig:result_h}
\end{figure}

\section{Discussion} \label{sec:conclusion}
The main contribution of this work is a new scalable implementation of the RTO optimization-based sampling method. 
By using a polar decomposition rather than a QR factorization to build the RTO proposal, and deriving this polar decomposition from the SVD of a linearized forward model, we can reduce the computational cost of evaluating the RTO proposal (excepting perhaps the evaluation of the forward model itself) to linear complexity in the parameter dimension.
This approach naturally splits the parameter space into two subspaces, and allows us to sample the RTO proposal and evaluate its density by solving smaller problems of size $r$, where $r$ is an intrinsic dimension of the problem. This splitting also relates the RTO proposal to other parameter dimension reduction methods for Bayesian inverse problems.
We formalize this RTO procedure in a function space setting, and show that the statistical performance of RTO is invariant to the discretized parameter dimension, under appropriate technical assumptions. 
Our results provide both practical algorithms and theoretical justification for applying RTO to high-dimensional inverse problems. 

We then provide an empirical exploration of factors influencing the sampling efficiency of RTO, using various PDE-constrained Bayesian inverse problems. Our numerical results confirm that RTO has dimension-independent sampling efficiency, and also show that the observational noise magnitude affects the cost of solving each optimization problem but not the mixing of the RTO Metropolis independence sampler. Using a simple elliptic PDE example, we observe that RTO outperforms pCN for wide range of problem settings. 
We also demonstrate the efficacy of RTO on a challenging two-dimensional parabolic PDE inverse problem, evaluating the impact of rank truncation on sampling efficiency and computational costs.
These numerical results confirm our theoretical findings: RTO offers a viable way to tackle inverse problems with high-dimensional parameters and even very small observational noise.

There are many ways to extend the work described here.
For example, once might use a mixture of several RTO proposals, defined by different linearizations, to better capture forward model nonlinearity in some extremely challenging inverse problems. Such mixtures might also help surmount the invertibility issues that arise when the assumptions of Theorem \ref{theo:rtoeqpr} are violated. For instance, one could employ a defensive mixture involving the prior distribution, along with localized proposals that are managed with trust-region strategies. 
The transport-map interpretation of RTO also suggests combining the RTO map with more elaborate local MCMC proposals on the Gaussian reference space, along the lines of~\cite{mattMCMC}. 
In addition, since RTO's prior-to-proposal mapping has a well-defined continuous limit, one can naturally use RTO to generate coupled proposal samples at different discretization levels. These correlated samples can be used as control variates in the multi-level/multi-fidelity setting \cite{multilevelpath,multilevel,multifidelity} to further accelerate the computation of posterior statistics.

\section*{Acknowledgments}
The authors would like to thank the anonymous referees for their valuable comments on improving the manuscript. We also thank Benjamin Zhang for his thoughtful comments and suggestions.
J.~Bardsley acknowledges support from the Gordon Preston Fellowship offered by the School of Mathematics at Monash University.
T.~Cui acknowledges support from the Australian Research Council, under grant number CE140100049 (ACEMS).
Y.\ Marzouk and Z.\ Wang acknowledge support from the United States Department of Energy, Office of Advanced Scientific Computing Research, AEOLUS Mathematical Multifaceted Integrated Capability Center.

\appendix

\section{Other optimization-based samplers}
\label{sec:app_imp}
Similar to RTO, other optimization-based sampling algorithms such as the random-map implementation of implicit sampling \cite{implicit} and Metropolized RML \cite{auRML} also yield deterministic couplings of two random variables.
Here we briefly review the transport maps defined by the random-map implementation of implicit sampling and by Metropolized RML.

Implicit sampling requires that the target density have level sets that are ``star-shaped,'' in that any ray starting from the mode passes through each level set exactly once. The target density is written as
\begin{equation} 
\pi_\tx{tar}(v) \propto \exp{\lb( -\ell(v) \rb)}, 
\end{equation} 
where the negative log-target density $\ell$ has a minimum at the mode $v_\tx{MAP}$. In order to draw proposal samples, we sample $\xi \in \R^n$ from a standard Gaussian and solve the following nonlinear system of equations to find a proposal $v_\ast \in \R^n$ :
\begin{equation} 
\lb\{\begin{aligned}
 \frac{L^{-1}(v_\ast-v_\tx{MAP})}{\|L^{-1}(v_\ast-v_\tx{MAP})\|} &= \frac{\xi}{\|\xi\|} \\
 \ell(v_\ast)-\ell(v_\tx{MAP}) &= \frac12 \|\xi\|^2 
\end{aligned}\rb. .
\end{equation} 
The \textit{direction} of the sample $v_\ast$ (relative to the mode) is based on the direction of the sampled $\xi$. The \textit{magnitude} of $v_\ast$ is then found through a one-dimensional line search for the point where the negative log target $\ell$ satisfies 
\[\ell(v_\ast) - \ell(v_\tx{MAP}) = \frac12 \|\xi\|^2 .\] 
In practice, $L$ is chosen to be a square matrix such that $L^\T L \defeq \lb[\Hess \ell(v_\tx{MAP})\rb]^{-1} $, where $\Hess \ell(v_\tx{MAP})$ is the Hessian of $\ell$ evaluated at the MAP point.

Similar to RTO, Metropolized RML requires that the target distribution have a Gaussian prior and additive Gaussian observational noise. Following the notation in Section \ref{sec:target}, we present a whitened version of Metropolized RML where the prior and observational noise covariances are transformed to the identity and the data is shifted to the origin. 
This way, the target density takes the form 
\begin{equation} 
\pi_\tx{tar}(v) \propto \exp\lb( -\frac12 \|v\|^2 -\frac12 \|G(v)\|^2 \rb).
\end{equation} 
Defining a tuning parameter $\gamma \in (0,1)$, Metropolized RML adds the auxiliary variables $d \in \R^m$ and considers an \textit{augmented} target distribution 
\begin{equation} 
\pi_\tx{tar}(v, d) \propto \exp\lb( -\frac12 \|v\|^2 -\frac{1}{2\gamma} \|G(v) - d\|^2 -\frac{1}{2(1-\gamma)} \| d \|^2\rb) 
\end{equation}
This defines a distribution on the joint space of parameters and data. Since the above joint distribution can also be written as
\begin{align*}
\pi_\tx{tar}(v, d) & \propto \exp\lb( -\frac12 \|v\|^2 -\frac12 \|G(v)\|^2 -\frac1{2\gamma(1-\gamma)} \|d - (1-\gamma) G(v) \|^2\rb) \\
& \propto \pi_\tx{tar}(v) \, \exp\lb(-\frac1{2\gamma(1-\gamma)} \|d - (1-\gamma) G(v) \|^2\rb) ,
\end{align*}
it can be expressed as product of the marginal distribution of $v$---which is the original target distribution---and the conditional distribution of $d$ given $v$.
Defining another tuning parameter $\rho \in (0,1)$, Metropolized RML generates a pair of random variables $\xi_v \sim \N(0,\I_{n})$ and $\xi_d \sim \N(0,\I_{m})$ and solve the following randomly perturbed optimization problem
\begin{equation*} 
(v_\ast, d_\ast) = \argmin_{(v, d)} \lb( \frac12 \|v - \xi_v\|^2 +\frac{1}{2\rho} \|G(v) - d\|^2 + \frac{1}{2(1-\rho)} \| d  - \xi_d\|^2 \rb),
\end{equation*}
to obtain a pair of proposal samples $(v_\ast, d_\ast)$.
Under the first order optimality condition, at the minima of the above objective function, the following system of nonlinear equations holds:
\begin{equation}\label{eq:rml_map}
\lb\{\begin{aligned}
v_\ast + \frac1\rho \nabla G(v_\ast)^\T \lb(G(v_\ast) - d_\ast\rb) &= \xi_v  \\
\frac1\rho d_\ast -\lb(\frac{1-\rho}\rho\rb) G(v_\ast) &= \xi_d 
\end{aligned}\rb. .
\end{equation}
Thus, one can compute the joint density of $(v_\ast, d_\ast)$ in the augmented parameter-and-data space using the mapping defined in \eqref{eq:rml_map}. 
The samples are Metropolized in the augmented space to obtain correlated samples distributed according the augmented target distribution $\pi_\tx{tar}(v, d)$. The components of $v$ are then distributed according to the original target distribution. The parameters $\rho$ and $\gamma$ are tunable settings of the algorithm. In practice, they are set close to one and zero respectively. 

A summary of the mappings induced by RTO, implicit sampling, and RML is given in Table~\ref{tab:maps}. Each algorithm describes a different map $S$, as in \eqref{eq:RTOmap}, to build the deterministic coupling. The actions of the inverse maps need to be computed using either nonlinear optimization algorithms or root finding methods (in one dimension). 

\begin{table}[h]
\newcommand{\specialcell}[2]{\begin{tabular}[#1]{@{}c@{}}#2\end{tabular}}

{\centering
\caption{Transport map interpretation of the three optimization-based samplers. In RTO, with default settings, the matrix $Q$ comes from a thin QR factorization. 
}\label{tab:maps}
\small
\begin{tabular}{l l l l}
\toprule
Algorithm \hspace{-1.5em}~& \pbox{5 cm}{Target distribution}  & \pbox{5 cm}{Transport map}  \\
\midrule
 RTO & $\pi_\tx{tar}(v) \propto \exp{\lb(-\frac12 \|H(v)\|^2\rb)}$ &  $Q^\T H(v) = \xi$ \\[0.5em]
 & &  $\tx{where\ }Q R \defeq \Jac H (v_\tx{ref})$ \\
\midrule
 \pbox{2.5cm}{Implicit sampling} & $\pi_\tx{tar}(v) \propto \exp{\lb( -\ell(v) \rb)}$ &
 \pbox{6cm}{  $\lb\{\begin{aligned}
 \frac{L^{-1}(v-v_\tx{ref})}{\|L^{-1}(v-v_\tx{ref})\|} &= \frac{\xi}{\|\xi\|} \\
 \ell(v)-\ell(v_\tx{ref}) &= \frac12 \|\xi\|^2 
\end{aligned}\rb.$}  \\[2.5em]
&& $\tx{where\ }L^\T L \defeq \lb[\Hess \ell(v_\tx{ref})\rb]^{-1}$ \\
\midrule
 \pbox{2.5cm}{RML \\} & 
 \pbox{6cm}{$\pi_\tx{tar}(v,d) \propto \exp\big( -\frac12 \|v\|^2 $ \\$ -\frac{1}{2\gamma} \|G(v) - d\|^2 -\frac{1}{2(1-\gamma)} \| d \|^2\big) $} & 
 \pbox{6cm}{$\lb\{\begin{aligned}
v + \frac1\rho \nabla G(v)^\T \lb(G(v) - d\rb) &= \xi_v  \\
\frac1\rho d -\lb(\frac{1-\rho}\rho\rb) G(v) &= \xi_d 
\end{aligned}\rb.$} \\[2.5em]
& $\tx{where\ }\gamma\in(0,1)$ & $\tx{where\ } \rho\in(0,1)$\\
\bottomrule
\end{tabular}\\
}%
\end{table}

\end{document}